\documentclass[a4paper]{amsart}
\usepackage{amsmath,amssymb,amsthm}
\usepackage{fullpage}

\usepackage{tikz}
\usetikzlibrary{arrows,decorations.markings}
\usepackage{hyperref}
\usepackage{subfigure}

\theoremstyle{plain}
\newtheorem{lemma}{Lemma}
\newtheorem{corollary}[lemma]{Corollary}
\newtheorem{theorem}[lemma]{Theorem}
\newtheorem{question}[lemma]{Question}
\newtheorem{remark}[lemma]{Remark}
\theoremstyle{definition}
\newtheorem{definition}[lemma]{Definition}
\newtheorem*{example}{Example}

\newcommand{\N}{\mathbb{N}}
\newcommand{\R}{\mathbb{R}}

\newcommand{\vrc}[2]{\mathrm{VR}(#1;#2)}

\newcommand{\conv}{\mathrm{conv}}
\newcommand{\interior}{\mathrm{int}}
\newcommand{\reach}{\mathrm{reach}}
\newcommand{\tp}{\tilde{p}}

\newcommand{\cl}{\mathrm{Cl}}
\newcommand{\wf}{\mathrm{wf}}
\newcommand{\im}{\mathrm{im}}

\begin{document}

\markboth{Henry Adams, Ethan Coldren, and Sean Willmot}
{The persistent homology of cyclic graphs}


\title{The persistent homology of cyclic graphs}

\author{Henry Adams}
\address{Colorado State University, Department of Mathematics, Fort Collins, CO, USA
adams@math.colostate.edu
}
\author{Ethan Coldren}
\address{Colorado State University, Fort Collins, CO, USA
ethan.coldren@rams.colostate.edu
}
\author{Sean Willmot}
\address{Colorado State University, Fort Collins, CO, USA
sean.willmot@rams.colostate.edu
}

\maketitle



\begin{abstract}
We give an $O(n^2(k+\log n))$ algorithm for computing the $k$-dimensional persistent homology of a filtration of clique complexes of cyclic graphs on $n$ vertices.
This is nearly quadratic in the number of vertices $n$, and therefore a large improvement upon the traditional persistent homology algorithm, which is cubic in the number of simplices of dimension at most $k+1$, and hence of running time $O(n^{3(k+2)})$ in the number of vertices $n$.
Our algorithm applies, for example, to Vietoris--Rips complexes of points sampled from a curve in $\R^d$ when the scale is bounded depending on the geometry of the curve, but still large enough so that the Vietoris--Rips complex may have non-trivial homology in arbitrarily high dimensions $k$.
In the case of the plane $\R^2$, we prove that our algorithm applies for all scale parameters if the $n$ vertices are sampled from a convex closed differentiable curve whose convex hull contains its evolute.
We ask if there are other geometric settings in which computing persistent homology is (say) quadratic or cubic in the number of vertices, instead of in the number of simplices.
\end{abstract}
\keywords{Persistent homology, Vietoris--Rips complex, convex curve, evolute, computational complexity}

\section{Introduction}

Given only a finite sample from a metric space, what properties of the space can one recover from the finite sample?
Vietoris--Rips complexes, which thicken a (possibly discrete) metric space into a more connected space, are a commonly used tool in applied topology in order to recover the homotopy type, homology groups, or persistent homology of a space from a finite sample~\cite{AttaliLieutier2014,AttaliLieutierSalinas2013,Carlsson2009,CarlssonIshkhanovDeSilvaZomorodian2008,Chambers2010,chazal2009gromov,ChazalDeSilvaOudot2013,ChazalOudot2008,EdelsbrunnerHarer}.
Given a metric space $X$ and a scale parameter $r\ge 0$, the \emph{Vietoris--Rips simplicial complex} $\vrc{X}{r}$ has a simplex for every finite subset of $X$ of diameter at most $r$.

It can in general be expensive to compute the homotopy type or persistent homology of a Vietoris--Rips complex.
Indeed, let $n=|X|$ be the number of points in a finite metric space $X$.
Computing the $k$-dimensional persistent homology of $\vrc{X}{r}$ as $r$ increases is cubic in the number of simplices of dimension at most $k+1$, and hence of running time $O(\binom{n}{k+2}^3)=O(n^{3(k+2)})$ in the number of vertices $n$.\footnote{For example, computing the 3-dimensional persistent homology of a Vietoris--Rips complex of $n$ points is cubic in the number of simplices, but of order $O(n^{15})$ in the number of vertices $n$.}
In this paper we show that if the scale parameter is such that the underlying 1-skeleton of a Vietoris--Rips complex is a \emph{cyclic graph}, then the $k$-dimensional homology of $\vrc{X}{r}$ (in that range of scales) can be computed in running time $O(n^2(k+\log n))$, which is nearly quadratic in the number of vertices $n$.
A \emph{cyclic graph} is a combinatorial abstraction of the 1-skeleton of a Vietoris--Rips complex built on a subset of the circle; a precise definition is given in Section~\ref{sec:prelims}.

Our main results are the following.

\begin{theorem}\label{thm:pers-cyclic}
Let
$G_1\subseteq G_2\subseteq \ldots \subseteq G_M$ be an increasing sequence of cyclic graphs on a final vertex set of size $n$.
Then the $k$-dimensional persistent homology of the resulting increasing sequence of clique complexes $\cl(G_1)\subseteq \cl(G_2)\subseteq \ldots \subseteq \cl(G_M)$ can be computed in running time $O(n^2(k+\log n))$.
\end{theorem}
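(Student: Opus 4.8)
The plan is to reduce the entire computation to tracking a single combinatorial invariant along the filtration, the \emph{winding fraction} $\wf(G)$ of a cyclic graph $G$ (a rational number in $[0,\tfrac12]$, in the sense of Adamaszek), together with a few auxiliary counts. The starting point is the classification of the homotopy types of clique complexes of cyclic graphs: $\cl(G)$ is contractible exactly when $\wf(G)=0$; it is homotopy equivalent to the odd sphere $S^{2l+1}$ when $\wf(G)$ lies in the open window $(\tfrac{l}{2l+1},\tfrac{l+1}{2l+3})$; and it is homotopy equivalent to a wedge of finitely many even spheres $\bigvee S^{2l}$ when $\wf(G)=\tfrac{l}{2l+1}$ for some $l\ge 1$. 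In particular each $\cl(G_i)$ has reduced homology concentrated in a single degree, so the persistent homology in dimensions at most $k$ consists of: the degree-$0$ bars, tracked by a union--find structure; at most one ``odd'' bar for every window $(\tfrac{l}{2l+1},\tfrac{l+1}{2l+3})$ with $2l+1\le k$ that the non-decreasing sequence $\wf(G_1)\le\cdots\le\wf(G_M)$ meets; and some ``even'' bars supported on the contiguous blocks of indices at which $\wf(G_i)$ equals a boundary value $\tfrac{l}{2l+1}$ with $2l\le k$. There are only $O(k)$ relevant windows and boundary values to consider.

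The technical heart of the argument is to show that the barcode is genuinely determined by this data, i.e.\ to upgrade the homotopy classification of individual complexes to a statement about the maps in the filtration. Concretely, I would prove: (i) if $G\subseteq H$ are cyclic graphs whose winding fractions lie in a common open window, then the inclusion $\cl(G)\hookrightarrow\cl(H)$ induces an isomorphism on reduced homology (indeed a homotopy equivalence), so each odd window contributes exactly one bar, from the first index at which $\wf(G_i)$ enters the window to the last such index (or to $M$); and (ii) along a block of indices with $\wf(G_i)=\tfrac{l}{2l+1}$, the maps on $H_{2l}$ are surjective, the $(2l-1)$-sphere of the preceding window dies exactly at the left end of the block, and the map into the first complex lying strictly inside the next window annihilates all of $H_{2l}$, so the even bars are confined to that block and are read off from the ranks of $H_{2l}$ of the complexes in it. Both (i) and (ii) should follow by arranging Adamaszek's deformation retractions and dismantling arguments to be compatible with the inclusions --- for instance by retracting the larger complex onto a subcomplex of the smaller, or by using the cyclic domination order --- and I expect \textbf{this naturality step, together with pinning down the exact bar endpoints at the transition blocks, to be the main obstacle}, since the published statements concern individual complexes rather than maps between them.

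Granting this structure theorem, the algorithm and its analysis are routine. First refine the filtration so that exactly one edge is inserted at each step; this does not change the persistent homology (up to relabelling the index set) and makes $M\le\binom n2$, so that reading and sorting the edge insertions and maintaining a union--find structure for the degree-$0$ bars costs $O(n^2(\log n+\alpha(n)))$. I maintain a greedy forward-reach map on the cyclic vertex set --- each of its $n$ coordinates only increases, hence changes $O(n)$ times, for $O(n^2)$ updates in all --- together with its rotation number, which is well defined by the cyclic structure and which I update incrementally; this yields $\wf(G_i)$ for every $i$ in total time $O(n^2\log n)$ (alternatively, one locates the $O(k)$ window transitions by binary search, evaluating $\wf(G_i)$ at $O(k\log n)$ indices via an $O(n\log n)$-time reconstruction of the map). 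Because the sequence of winding fractions is monotone, for each of the $O(k)$ boundary values $\tfrac{l}{2l+1}$ I read off the contiguous block of indices at which it is attained; within each such block I compute the required ranks of $H_{2l}$ by a combinatorial count of the wedge summands, in $O(n^2)$ time. Emitting the even bars supported on these blocks, the odd bars for the intervening windows, and the degree-$0$ bars gives the full barcode. Summing the costs yields running time $O(n^2(k+\log n))$, with the $n^2\log n$ term coming from the input handling and the winding-fraction computation and the $n^2 k$ term bounding the work over the $O(k)$ homological dimensions.
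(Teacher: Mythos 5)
Your high-level strategy is the same as the paper's: classify each $\cl(G_i)$ via its winding fraction, note that the winding fractions are monotone, handle the odd windows by showing inclusions within a window are homotopy equivalences (the paper cites Proposition~4.9 of~\cite{AA-VRS1} for exactly this), and isolate the even-dimensional bars inside the blocks where $\wf(G_i)=\tfrac{l}{2l+1}$. The odd-dimensional part of your argument is essentially the paper's.

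The genuine gap is in your structural claim~(ii) for the even blocks. You assert that along a block with $\wf(G_i)=\tfrac{l}{2l+1}$ the maps on $H_{2l}$ are surjective, and conclude that the even bars can be ``read off from the ranks of $H_{2l}$.'' Both halves fail. Surjectivity of $H_{2l}(\cl(G_i))\to H_{2l}(\cl(G_j))$ would force the Betti numbers to be non-increasing along the block, but the number of $2l$-spheres (one less than the number of periodic orbits of the cyclic dynamical system) is explicitly \emph{not} monotone for $l\ge 1$ --- adding an edge can destroy an old periodic orbit and simultaneously create a new one, or create one where none existed. And even setting surjectivity aside, the pointwise ranks $\dim H_{2l}(\cl(G_i))$ do not determine a barcode: for a rank sequence such as $2,1,2$ there are several inequivalent persistence modules, distinguished only by the ranks of the composite maps. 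What is missing is a mechanism for deciding, when a periodic orbit is destroyed, \emph{which} existing bar dies. The paper supplies this by labelling each $2l$-dimensional bar with a periodic orbit of winding fraction $\tfrac{l}{2l+1}$, invoking Proposition~4.2 of~\cite{AAR} to identify the induced maps on $H_{2l}$ with the induced correspondence of periodic orbits under iteration of $f$, and then applying the elder rule when an orbit $\mathcal{O}$ is absorbed into another orbit $\mathcal{O}'$. Relatedly, your complexity claim for the even blocks is under-justified: a block can contain $\Theta(n^2)$ edge insertions, and recomputing the orbit count from scratch at each costs $\Theta(n)$; the paper avoids this by maintaining the dynamical system $f$ and the periodicity flags incrementally, walking only $2k+1$ steps per edge insertion, which is where the $O(n^2k)$ term actually comes from.
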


\begin{figure}[htb]
\centering
\includegraphics[width=0.6\textwidth]{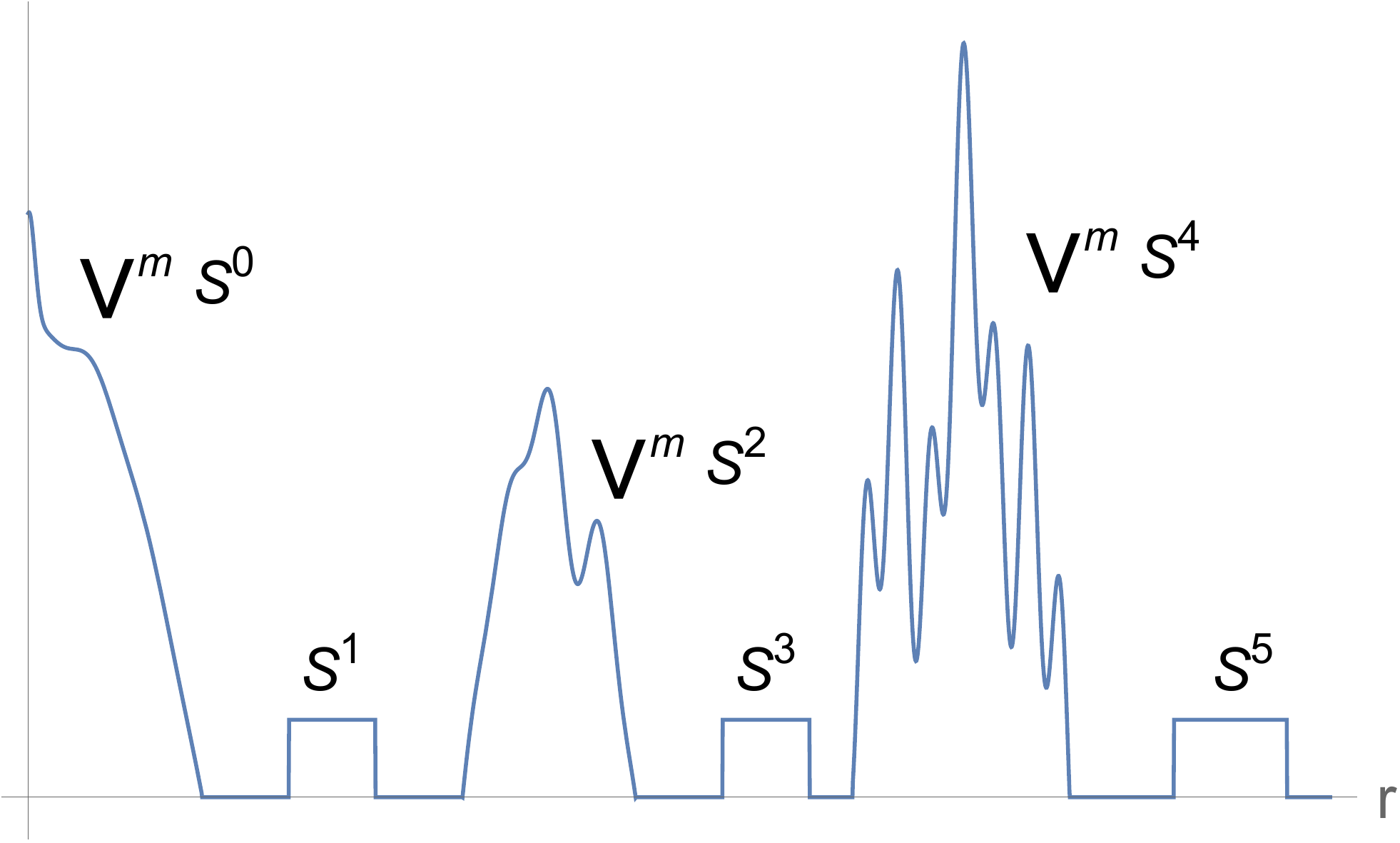}
\caption{
A possible evolution of the homotopy types of clique complexes of cyclic graphs (or in particular, of $\vrc{X}{r}$, for $X$ a subset of curve in Theorems~\ref{thm:Rd} or~\ref{thm:R2}).
The homotopy type is either a single odd sphere $S^{2k+1}$, or a wedge sum of even spheres $\bigvee^m S^{2k}$ for some $m\ge 0$.
The vertical axis gives a cartoon of how $m$ might vary with the scale: the number $m$ of $2k$-spheres is a non-decreasing function of the scale for $2k=0$, but otherwise $m$ need not be a monotonic function of $r$ for $2k\ge2$.
}
\label{fig:homologyGraph}
\end{figure}

As shown in Figure~\ref{fig:homologyGraph}, the homotopy types of clique complexes of finite cyclic graphs can be surprising: an odd sphere $S^{2k+1}$ for any $k\ge0$, or a wedge of even spheres $\bigvee^{m}S^{2k}$ for any $m\ge0$ and $k\ge 0$.

The initial motivating examples to which Theorem~\ref{thm:pers-cyclic} can be applied are when the increasing sequence of clique complexes are obtained as the Vietoris--Rips complexes, over increasing scale parameters, of $n$ points sampled from a circle or from an ellipse of sufficiently small eccentricity (meaning the ratio between the axes is at most $\sqrt{2}$).
Indeed, the 1-skeletons of these Vietoris--Rips complexes are cyclic graphs by Definition~3.3 of~\cite{AA-VRS1} and Lemma~7.1 of~\cite{AAR}, respectively.
As a much more general setting to which Theorem~\ref{thm:pers-cyclic} applies, let $C$ be a curve in $\R^d$.
The following theorem gives a lower bound on a scale parameter such that the 1-skeleton of $\vrc{C}{r}$ is an infinite cyclic graph.

\begin{theorem}\label{thm:Rd}
Let $C\subseteq \R^d$ be a curve homeomorphic to the circle.
Fix $r_C>0$.
Suppose that for all $p\in C$ and $0\le r\le r_C$, the intersection $B(p,r_C)\cap C$ is a connected arc.
Then the 1-skeleton of $\vrc{C}{r}$ is a cyclic graph for all $0\le r\le r_C$.
\end{theorem}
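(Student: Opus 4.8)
The plan is to equip the vertex set $C$ of the $1$-skeleton of $\vrc{C}{r}$ with the cyclic order inherited from a homeomorphism $C\approx S^1$ and to show that this order witnesses the definition of a cyclic graph (Section~\ref{sec:prelims}) for every $0\le r\le r_C$. Everything rests on a reformulation of the hypothesis: for each fixed $x\in C$, the sets $B(x,\rho)\cap C$ are connected arcs for all $0\le\rho\le r_C$ if and only if the function $d(x,\cdot)\colon C\to\R$ is monotone non-decreasing along each of the two directions away from $x$, at least until its value first reaches $r_C$ (a failure of monotonicity below $r_C$ would produce a disconnected sublevel set). From this I would first extract the basic consequence that for every $p\in C$ and $0\le r\le r_C$ the closed metric neighborhood $N_r[p]:=\{q\in C:d(p,q)\le r\}=\overline{B}(p,r)\cap C$ is a closed arc of $C$: for $r<r_C$ one writes $\overline{B}(p,r)\cap C=\bigcap_{r<s\le r_C}\bigl(B(p,s)\cap C\bigr)$, a nested intersection of arcs of $C$, hence an arc, and the extreme scale $r=r_C$ is handled by a small separate limiting argument (one must rule out or absorb an isolated point of $C$ lying at distance exactly $r_C$ from $p$). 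If instead $N_r[p]=C$, then the $1$-skeleton is the complete graph, which is trivially a cyclic graph.

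This already gives the defining clause that every closed neighborhood is a set of consecutive vertices. For the clause tying each edge to an arc, take an edge $\{p,q\}$, i.e.\ $d(p,q)\le r$. Monotonicity of $d(p,\cdot)$ singles out the $C$-arc from $p$ to $q$ along which $d(p,\cdot)$ rises monotonically from $0$ to $d(p,q)$; every point of this arc is within $d(p,q)\le r$ of $p$, so the arc lies in $N_r[p]$, and running the same monotonicity argument at an interior vertex $z$ of the arc (legitimate since $d(z,p)\le r<r_C$) shows the arc spans a clique. Symmetrically $d(q,\cdot)$ singles out a clique $C$-arc from $q$ to $p$ lying in $N_r[q]$. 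The key point is that these two $C$-arcs must coincide, so that there is an unambiguous ``short arc'' between any two adjacent vertices, contained in $N_r[p]\cap N_r[q]$: were the two chosen arcs instead the two complementary $C$-arcs between $p$ and $q$, then the point of $C$ farthest from $p$ would be forced onto one of them and the point of $C$ farthest from $q$ onto the other, and chasing the monotonicity of the distance functions based at $p$, at $q$, and at suitably chosen interior points of these arcs against one another yields a contradiction. Once coincidence is in hand, the short arcs over all edges form a coherent family compatible with the single fixed cyclic order, which is precisely what the definition of a cyclic graph demands; the finite-vertex statement needed to feed Theorem~\ref{thm:pers-cyclic} then follows since restricting a cyclic graph to a subset of its vertices, with the inherited cyclic order, yields a cyclic graph.

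I expect the main obstacle to be exactly this coincidence-and-coherence step: showing that the single cyclic order of $C$ serves every edge simultaneously, which amounts to excluding the pathological ``complementary short arcs'' configuration. This is the one place where the hypothesis must be used at \emph{all} points of $C$ and not merely at the two endpoints of a given edge; informally, the hypothesis forbids the curve from doubling back on itself near any of its points, so a short chord cannot be realized ``the long way around'' simultaneously from both endpoints. By contrast, the open-versus-closed ball technicality at the boundary scale $r=r_C$ and the restriction statement for finite vertex sets are routine secondary points.
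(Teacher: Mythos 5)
Your proposal is correct and follows essentially the same route as the paper: reformulate the arc hypothesis as monotonicity of $d(p,\cdot)$ along the two directions away from $p$, orient each edge according to which side of the arc $B(p,r)\cap C$ the other endpoint lies on, and verify the cyclic condition for $x\prec y\prec z\prec x$ directly from that monotonicity. The step you flag as the main obstacle---that the orientations seen from $p$ and from $q$ agree, i.e.\ that the two ``short arcs'' coincide---is exactly the step the paper also isolates, and the paper dispatches it just as tersely (both orientations would force $B(p,r_C)\cap C=C=B(q,r_C)\cap C$, contradicting that each such intersection is a proper arc).
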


The clique complexes even of infinite cyclic graphs, such as $\vrc{C}{r}$ when Theorem~\ref{thm:Rd} applies, are homotopy equivalent to either an odd-dimensional sphere or a wedge sum of even-dimensional spheres (see Theorems~5.1 and 5.2 of~\cite{AAR}).

For example, let $f\colon S^1\to \R^4$ be the scaled symmetric moment curve defined by 
\[f(t)=(\cos t, \sin t, \alpha \cos 3t, \alpha \sin 3t),\]
where $\alpha\in\R$ is a constant (Figure~\ref{fig:symmetric-ellipse}(left)).
We show in Example~\ref{ex:symmetric} that if $\alpha<\frac{1}{\sqrt{3}}$, then $C=\im(f)$ satisfies the hypotheses of Theorem~\ref{thm:Rd} for all scales up until the diameter of $C$, after which the Vietoris--Rips complex $\vrc{C}{r}$ is contractible.
Understanding the persistent homology of Vietoris--Rips complexes of trigonometric moment curves such as this is important for applications of topology to time series analysis~\cite{perea2019topological,perea2016persistent,perea2015sliding}.

\begin{figure}[htb]
\centering
\includegraphics[width=0.32\textwidth]{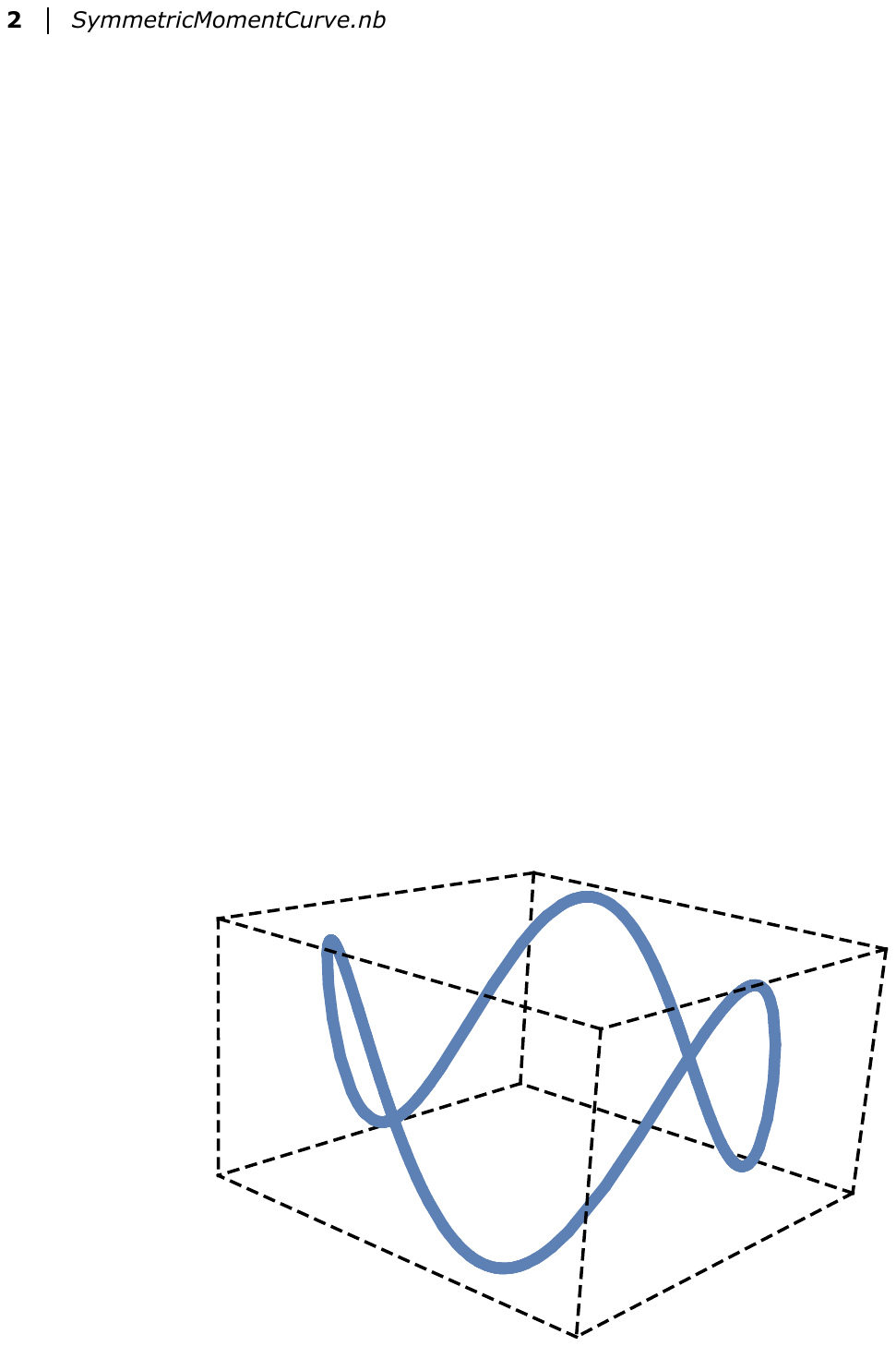}
\hspace{15mm}
\includegraphics[width=0.3\textwidth]{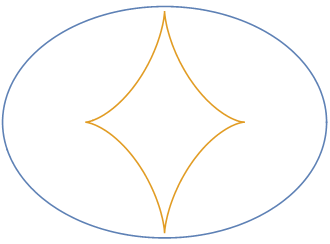}
\caption{
Example curves for which our results apply.
(Left) A projection onto the first three coordinates of the scaled symmetric moment curve $f(t)=(\cos t, \sin t, \alpha \cos 3t, \alpha \sin 3t)$, with $\alpha=\frac{1}{\sqrt{3}}-\frac{1}{100}$.
Theorem~\ref{thm:Rd} holds for all $r$ up to the diameter of $f$, after which the Vietoris--Rips complex is contractible.
(Right) An ellipse $C=\{(a\cos t, b\sin t\}~|~t\in \R\}\subseteq\R^2$, with $\frac{a}{b}<\sqrt{2}$.
The convex hull of $C$ contains the evolute of $C$, and hence Theorem~\ref{thm:R2} applies.
}
\label{fig:symmetric-ellipse}
\end{figure}

For some curves, the bound on the allowable scale parameters in Theorem~\ref{thm:Rd} disappears. 
Let $C$ be a convex closed differentiable curve in the plane.
The \emph{evolute} of $C$ is the envelope of the normals, or equivalently, the locus of the centers of curvature.
We say that a graph is a \emph{cone} if there is a vertex $v$ that shares an edge with every other vertex in the graph.
If the 1-skeleton of $\vrc{C}{r}$ is a cone, then the simplicial complex $\vrc{C}{r}$ is contractible.
The following theorem shows that if the convex hull of $C$ contains the evolute of $C$, then the homotopy type of $\vrc{C}{r}$ is conntrollable for all $r\ge 0$.

\begin{theorem}\label{thm:R2}
Let $C$ be a strictly convex closed differentiable planar curve $C$ equipped with the Euclidean metric.
If the convex hull of $C$ contains the evolute of $C$, then the 1-skeleton of $\vrc{C}{r}$ is a cyclic graph or a cone for all $r\ge 0$.
\end{theorem}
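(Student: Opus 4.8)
The plan is to deduce Theorem~\ref{thm:R2} from Theorem~\ref{thm:Rd}: I will show that the evolute hypothesis forces the metric balls $B(p,r)\cap C$ to be connected arcs for \emph{every} $p\in C$ and \emph{every} radius $r\ge0$, so that Theorem~\ref{thm:Rd} applies with the bound $r_C$ taken arbitrarily large, and then handle the remaining large scales by hand. As a preliminary reduction, note that the hypothesis is only meaningful when $C$ has a well-defined, strictly positive curvature function: if $C$ were merely once differentiable the evolute would not be defined, and if the curvature vanished somewhere the corresponding center of curvature would escape to infinity and could not lie in the compact set $\conv(C)$. So I may assume $C$ is twice continuously differentiable, parametrized by arc length as $\gamma\colon\R/L\Z\to\R^2$ traversed counterclockwise, with curvature $\kappa>0$, inward unit normal $N$, $\gamma''=\kappa N$, and evolute $e(t)=\gamma(t)+\tfrac1{\kappa(t)}N(t)$.

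The geometric heart is a unimodality statement. Fix $p=\gamma(s)\in C$ and set $f(t)=\lvert p-\gamma(t)\rvert$. Since $\tfrac{d}{dt}f(t)^2=-2\langle p-\gamma(t),\gamma'(t)\rangle$, a parameter $t\ne s$ is critical for $f$ exactly when the chord $\gamma(t)p$ is normal to $C$ at $\gamma(t)$; there $p-\gamma(t)=\rho\,N(t)$ with $\rho=f(t)=\langle p-\gamma(t),N(t)\rangle>0$, the positivity coming from strict convexity (the tangent line to $C$ at $\gamma(t)$ meets $C$ only at $\gamma(t)$, so $p$ lies strictly on the inward side). Differentiating once more, $\tfrac{d^2}{dt^2}f(t)^2=2\bigl(1-\kappa(t)\rho\bigr)$ at such a point, which is negative precisely when $\rho>\tfrac1{\kappa(t)}$, i.e.\ precisely when $p$ lies strictly beyond the center of curvature $e(t)$ on the inward normal ray from $\gamma(t)$. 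Here is where the hypothesis enters: the segment $[\gamma(t),e(t)]$ has both endpoints in the convex set $\conv(C)$ and so lies in $\conv(C)$; since $C$ is strictly convex it contains no line segment, so no point of $C=\partial\conv(C)$ other than $\gamma(t)$ lies in the relative interior of $[\gamma(t),e(t)]$; hence $p$, which lies on the ray $\{\gamma(t)+\tau N(t):\tau>0\}$, must satisfy $\rho\ge\tfrac1{\kappa(t)}$. Apart from the borderline situation $\rho=\tfrac1{\kappa(t)}$ — which forces $p=e(t)$, i.e.\ $p$ itself lies on the evolute, so this occurs for only finitely many $t$ and is disposed of by a small perturbation of $p$ or a higher-order check that such critical points remain maxima, exactly as happens at eccentricity $\sqrt2$ for an ellipse — every critical point of $f$ other than the trivial minimum at $t=s$ is a nondegenerate local maximum. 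Thus $f^2$ is a Morse function on $\R/L\Z\cong S^1$ with a unique minimum, hence with exactly two critical points, hence unimodal; consequently $B(p,r)\cap C=f^{-1}([0,r])$ is a connected arc (or empty, or all of $C$) for every $r\ge0$.

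With connectedness established for all radii, Theorem~\ref{thm:Rd} applies with $r_C$ as large as desired — say $r_C=\mathrm{diam}(C)$ — and gives that the $1$-skeleton of $\vrc{C}{r}$ is a cyclic graph for all $0\le r\le\mathrm{diam}(C)$. For $r\ge\mathrm{diam}(C)$ every pair of points of $C$ is within distance $r$, so the $1$-skeleton is the complete graph on $C$, which is a cone. Together these cover all $r\ge0$. I expect the only genuine friction to be bookkeeping rather than mathematical depth — matching up the exact hypothesis of Theorem~\ref{thm:Rd}, the smoothness reduction, and the borderline case where the evolute touches $C$; the substance is the short observation that ``$p$ lies beyond the center of curvature at $\gamma(t)$'' is exactly the condition for $\gamma(t)$ to be a local maximum of the chord length, combined with the fact that the evolute lying inside $\conv(C)$ keeps $C$ clear of every such bad segment.
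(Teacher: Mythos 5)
Your proposal is correct in substance but takes a genuinely different route to the key intermediate fact. Both arguments ultimately reduce to Theorem~\ref{thm:Rd} by showing that every $B(p,r)\cap C$ is a connected arc (and both dispose of the large scales with the cone observation), but they establish the unimodality of the chord-length function $d_p$ differently. The paper introduces the normal-line map $h\colon C\to C$, proves it has degree one, identifies the nontrivial critical points of $d_p$ with $h^{-1}(p)$, and then characterizes injectivity of $h$ via an orientation-matching argument (Lemmas in the appendix, using Frenet--Serret) that is equivalent to the evolute lying in $\conv(C)$. You instead compute directly that at a nontrivial critical point $\gamma(t)$ of $d_p$ one has $\tfrac{d^2}{dt^2}d_p^2=2(1-\kappa(t)\rho)$, and observe that the hypothesis forces $\rho\ge 1/\kappa(t)$ because the segment from $\gamma(t)$ to the center of curvature lies in $\conv(C)$ and strict convexity keeps its relative interior off $C$. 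Your second-derivative test and the paper's orientation computation are two packagings of the same underlying inequality $s(t)\gtrless 1/\kappa(t)$, but yours is more elementary and self-contained (no degree theory, no implicit function theorem), at the cost of an explicit $C^2$ reduction that the paper leaves implicit in its own use of $\kappa$ and $n'$. The one place where you owe more detail is the borderline case $\rho=1/\kappa(t)$, i.e.\ $p$ lying on the evolute, which the non-strict hypothesis ``contains'' genuinely permits (e.g.\ the ellipse with $a/b=\sqrt2$): there the critical point is degenerate and the Morse-style count of ``exactly two critical points'' does not literally apply. The perturbation of $p$ you suggest does not immediately transfer the conclusion back to the degenerate $p$, but the higher-order check does work (one computes $\tfrac{d^3}{dt^3}d_p^2=-2\rho\kappa'(t)$ at such a point, so generically it is an inflection rather than a spurious local minimum, and sublevel sets stay connected); note the paper's own proof of its Lemma on matching orientations silently assumes the strict inequalities as well, so you are no worse off than the published argument on this point.
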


For example, consider the ellipse in Figure~\ref{fig:symmetric-ellipse}(right).
The convex hull of an ellipse contains its evolute if the ratio of the axis lengths is at most $\sqrt{2}$, as is the case here.
Hence Theorem~\ref{thm:R2} applies.

As a consequence, we can compute persistent homology efficiently.
We hope this is a first step towards identifying more general geometric settings in which computing persistent homology is (say) quadratic or cubic in the number of vertices, instead of in the number of simplices.

\begin{corollary}\label{cor:near-quadratic}
There is an $O(n^2(k+\log n))$ algorithm for determining the $k$-dimensional persistent homology of $\vrc{X}{r}$, where $X$ is a sample of $n$ points from a strictly convex closed differentiable planar curve whose convex hull contains its evolute.
\end{corollary}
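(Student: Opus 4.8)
The plan is to combine Theorem~\ref{thm:R2} with Theorem~\ref{thm:pers-cyclic}. Write $X=\{x_1,\dots,x_n\}\subseteq C$, where $C$ is the given curve, and begin by computing all $\binom{n}{2}=O(n^2)$ pairwise Euclidean distances between points of $X$ (each in $O(1)$ time) and sorting them in time $O(n^2\log n)$, producing distinct values $0=\rho_0<\rho_1<\dots<\rho_M$ with $M\le\binom{n}{2}$. The $1$-skeleton $G_i$ of $\vrc{X}{\rho_i}$ changes only at these scales, and since a Vietoris--Rips complex is the clique complex of its $1$-skeleton, the filtration $r\mapsto\vrc{X}{r}$ is, after reindexing by the $\rho_i$, exactly the filtration of clique complexes $\cl(G_0)\subseteq\cl(G_1)\subseteq\dots\subseteq\cl(G_M)$ on the fixed $n$-element vertex set $X$.

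It then suffices to show that each $G_i$ is a cyclic graph, for then Theorem~\ref{thm:pers-cyclic} computes the $k$-dimensional persistent homology of this filtration in time $O(n^2(k+\log n))$, which dominates the $O(n^2)$ distance computations and the $O(n^2\log n)$ sort and yields the stated bound. Because $X\subseteq C$ carries the restricted Euclidean metric, $G_i$ is precisely the induced subgraph, on the vertex subset $X$ with its inherited cyclic order, of the $1$-skeleton of $\vrc{C}{\rho_i}$; and by Theorem~\ref{thm:R2} the latter is a cyclic graph or a cone. In the cyclic case, $G_i$ is cyclic because cyclic graphs are closed under passing to the induced subgraph on a subset of vertices with the inherited cyclic order --- this is immediate from the definition in Section~\ref{sec:prelims}, since the restriction of a closed neighborhood to a subset of a cyclic order is again an arc. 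In the cone case there is an apex $v_0\in C$ with $d(v_0,p)\le\rho_i$ for all $p\in C$, a property preserved at all larger scales, so the cone scales form a terminal segment of the filtration; there one argues (using the geometry underlying Theorem~\ref{thm:R2}, e.g.\ that on such a curve every metric ball meets $C$ in a connected arc) that $G_i$ is still a cyclic graph or a cone, and when $G_i$ is a cone $\cl(G_i)$ is contractible, so this part of the filtration contributes only the single infinite $0$-dimensional persistence bar (for $k=0$) or nothing (for $k\ge 1$) and is appended to the output in constant time.

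The mathematical content of the corollary is thus carried entirely by Theorems~\ref{thm:R2} and~\ref{thm:pers-cyclic}; what remains is bookkeeping. The one point requiring genuine care --- and the place I expect the only friction --- is the transfer from the continuum curve $C$ to the finite sample $X$: one must make sure that the $1$-skeleton of $\vrc{X}{r}$ inherits the ``cyclic graph or cone'' dichotomy of $\vrc{C}{r}$, which amounts to the induced-subgraph closure of the class of cyclic graphs together with a short direct treatment of the cone regime.
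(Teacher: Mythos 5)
Your proof is correct and follows essentially the same route as the paper: combine Theorem~\ref{thm:R2} with Theorem~\ref{thm:pers-cyclic}, using that cyclic graphs are closed under induced subgraphs on cyclically ordered vertex subsets and that the cone regime is a terminal, contractible segment of the filtration. The only detail the paper makes explicit that you leave implicit is the algorithmic side of ``inheriting'' the cyclic structure --- the cyclic order on $X$ is obtained by a convex hull computation in $O(n\log n)$ and the edge orientations are determined incrementally without knowledge of $C$, as in the paper's appendix --- but this is bookkeeping within the stated time bound.
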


Corollary~\ref{cor:near-quadratic} follows from Theorems~\ref{thm:pers-cyclic} and~\ref{thm:R2} since, as we explain in \ref{app:points-to-cyclic}, given a sample $X$ of $n$ points from some strictly convex closed differentiable planar curve $C$ whose convex hull contains its evolute, it is easy to determine the cyclic graph structure on the 1-skeleton of $\vrc{X}{r}$ even \emph{without} knowledge of $C$.

\begin{corollary}\label{cor:near-linear}
There is an $O(n\log n)$ algorithm for determining the homotopy type of $\vrc{X}{r}$, where $X$ is a sample of $n$ points from a strictly convex closed differentiable planar curve whose convex hull contains its evolute, and where $r\ge 0$.
\end{corollary}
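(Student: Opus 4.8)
The plan is to avoid routing through Theorem~\ref{thm:pers-cyclic}: applying it to a one-term filtration would still cost $O(n^2(k+\log n))$, whereas for a single scale the homotopy type of $\vrc{X}{r}$ can be read off much faster. First I would recover the combinatorial structure of the $1$-skeleton $G$ of $\vrc{X}{r}$ in $O(n\log n)$ time, and then compute a single rational invariant of $G$ — its winding fraction — in $O(n)$ additional time, from which the homotopy type is immediate by the classification of clique complexes of cyclic graphs.

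\emph{Step 1: extracting $G$.} Since $X$ lies on the strictly convex curve $C$, the points of $X$ are in strictly convex position, so their cyclic order is their convex-hull order and is obtained by an angular sort about an interior point (for instance their centroid) in $O(n\log n)$ time. Writing $v_0,\dots,v_{n-1}$ for $X$ in this order, $G$ is — by Theorem~\ref{thm:R2} applied to $C$, and as explained in~\ref{app:points-to-cyclic} using only $X$ and $r$ — either a cyclic graph compatible with this cyclic order or a cone. In the cyclic case each closed neighborhood $N[v_i]$ is an arc $\{v_{i-a_i},\dots,v_{i+b_i}\}$, and the ``farthest clockwise neighbor'' map $f(i)=i+b_i$ is, because $G$ is cyclic, a monotone degree-one circle map; consequently a single two-pointer sweep that advances the clockwise frontier of $N[v_i]$ as $i$ increases around the circle computes every $b_i$, and hence the edge set of $G$ and the map $f$, with only $O(n)$ Euclidean distance evaluations, i.e.\ $O(n)$ time in $\R^2$. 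If during this sweep some arc exhausts all of $X$ — equivalently, if some $v_i$ is adjacent to every other vertex — then $G$ is a cone, $\vrc{X}{r}$ is contractible, and we stop; otherwise $G$ is a finite cyclic graph that is not a cone.

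\emph{Step 2: homotopy type from the winding fraction.} For a finite cyclic graph $G$ that is not a cone, the winding fraction $\wf(G)\in[0,\tfrac12)$ is the rotation number of the map $f$ above: lifting $f$ to $\hat f\colon\Z\to\Z$, one has $\wf(G)=\lim_{s\to\infty}\hat f^{\,\circ s}(0)/(sn)$. By the homotopy classification of clique complexes of cyclic graphs in terms of the winding fraction (see Theorems~5.1 and~5.2 of~\cite{AAR}), $\cl(G)\simeq S^{2l+1}$ when $\tfrac{l}{2l+1}<\wf(G)<\tfrac{l+1}{2l+3}$, while $\cl(G)\simeq\bigvee^m S^{2l}$ when $\wf(G)=\tfrac{l}{2l+1}$, with $m\ge0$ given by an explicit combinatorial formula in this degenerate case. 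Since $f$ acts on the finite set $\Z/n$, the $f$-orbit of $v_0$ enters a cycle within $n$ steps, so following it with an $O(n)$-size ``visited'' array yields $\wf(G)$ exactly as $p/q$, where the periodic part of the orbit has length $q$ and accumulated clockwise displacement $pn$; this determines $l$ and which case we are in. In the degenerate case we then evaluate the formula for $m$, which I expect to be an $O(n)$ computation on the same orbit and arc-length data. Combined with Step~1, the total running time is $O(n\log n)$, dominated by the angular sort — which is essentially unavoidable, since recovering even the cyclic order of $X$ is equivalent to sorting $n$ numbers.

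\emph{The main obstacle.} Two points need care. The first is that $G$ is genuinely extractable in $O(n\log n)$: this rests on the monotonicity of the farthest-neighbor map (a feature of cyclic graphs, guaranteed here by Theorem~\ref{thm:R2}), which is exactly what lets the two-pointer sweep replace the naive $O(n^2)$ round of pairwise distance checks — this bookkeeping belongs in~\ref{app:points-to-cyclic}. The second, and the harder, is the degenerate case $\wf(G)=\tfrac{l}{2l+1}$: one needs the precise combinatorial formula for the wedge multiplicity $m$ and a proof that it is evaluable in $O(n)$ from the data of Steps~1 and~2. (For the graph powers $C_n^k$ it reduces to $m=\tfrac{n}{2l+1}-1$ when $(2l+1)\mid n$, which indicates the general shape.) I would obtain it by tracking the reduced Euler characteristic of $\cl(G)$ — which equals $m$ throughout the degenerate range and $-1$ in the generic range — through the nerve/Dowker-type reductions underlying the homotopy classification, so as to express $m$ in closed form in terms of the arc lengths $b_i$ and the periodic orbit of $f$.
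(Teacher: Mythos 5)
Your argument is correct, but its second half takes a genuinely different route from the paper. The paper's proof also begins by recovering the cyclic order of $X$ in $O(n\log n)$ via a convex hull computation (\ref{app:points-to-cyclic}), but it then hands the cyclic graph to the dominated-vertex-removal algorithm of Theorem~5.7 and Corollary~5.9 of~\cite{AAFPP-J}: one deletes dominated vertices, without changing the homotopy type of the clique complex, until a minimal regular configuration remains, from which the homotopy type is read off. You instead extract the full dynamics $f$ by a two-pointer sweep (valid, since a cyclic graph yields a monotone degree-one circle map, so the frontier advances only $O(n)$ times in total) and then apply the winding-fraction classification directly --- which is essentially the machinery the paper deploys in Section~\ref{sec:near-quadratic} for the persistent homology algorithm, specialized to a single scale. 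One substantive remark: the issue you flag as your ``main obstacle,'' namely a closed-form for the wedge multiplicity $m$ in the degenerate case $\wf(G)=\tfrac{l}{2l+1}$, is not actually open. Proposition~4.1 of~\cite{AAR}, quoted verbatim in Section~\ref{ss:cyclic}, gives $m=P-1$ where $P$ is the number of periodic orbits of $f$, and counting the cycles of a functional graph on $n$ vertices is a standard $O(n)$ computation with a visited array --- no Euler-characteristic detour is needed. With that substitution your proof is complete and arguably more self-contained than the paper's, since it leans only on~\cite{AAR} (already needed elsewhere in the paper) rather than importing the separate algorithm of~\cite{AAFPP-J}; the paper's route, in exchange, avoids having to justify the $O(n)$ extraction of $f$, the orbit analysis, and the cone test (for which your sweep should also track incoming arcs, since a cone apex may be adjacent to other vertices through a mix of incoming and outgoing edges).
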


We emphasize that even though $X$ is planar, the homotopy type of $\vrc{X}{r}$ in Corollary~\ref{cor:near-linear} can be surprising: a wedge sum of spheres of arbitrarily high dimension (see Figure~\ref{fig:homologyGraph}).

We would like to emphasize that the goal of this paper is not to recover a curve from a finite sample, which is a well-studied problem~\cite{AttaliLieutierSalinas2013,boissonnat2018geometric}.
Instead, our goal is to better understand the computational complexity of the homology and homotopy types of Vietoris--Rips complexes.
Vietoris--Rips complexes are designed to recover not only curves but also arbitrary homotopy types.
When given a finite subset $X\subseteq\R^d$ for $d$ small, in order to understand the ``shape" of $X$, one would want to compute its \emph{\v{C}ech} or \emph{alpha complex}~\cite{EdelsbrunnerHarer} instead of computing its Vietoris--Rips complex.
Indeed, by the nerve lemma the \v{C}ech and alpha complexes will have milder homotopy types.
However, in higher-dimensional Euclidean space, it becomes prohibitively expensive to compute a \v{C}ech or alpha complex, and hence Vietoris--Rips complexes are frequently used~\cite{boissonnat2018geometric,EdelsbrunnerHarer}.
The theory of Vietoris--Rips complexes, though more subtle than that for  the aforementioned \v{C}ech and alpha complexes, is important since Vietoris--Rips complexes are computable in $\R^d$ for $d$ large whereas \v{C}ech and alpha complexes are not.

Though Theorems~\ref{thm:Rd} and~\ref{thm:R2} are about curves, these results have consequences for much broader classes of spaces.
Let $M$ be an arbitrary metric space, for example a manifold of arbitrary dimension, or a stratified space, or something more wild.
If $M$ contains a curve $C\subseteq M$ as a metric retract, meaning that there exists a map $f\colon M\to C$ with the restriction $f|_C$ equal to the identity on $C$, and with $d(m,m')\le d(f(m),f(m'))$ for all $m,m'\in M$, then recent work by Virk extending~\cite{virk2019rips} shows that the persistent homology of the Vietoris--Rips complex of $C$ appears as a summand in the persistent homology of the Vietoris--Rips complex of $M$.
A related result is also true even if $C$ is not a metric retract of $M$, but instead only a metric retract of a small neighborhood of $C$ in $M$ --- in which case the higher-dimensional persistent homology of the Vietoris--Rips complex of $C$ appears for a short range of scale parameters in the persistent homology of the Vietoris--Rips complex of $M$.
Virk has used these results to show how the persistent homology of $M$ can encode a lot of geometric information about $M$, such as the lengths of geodesic curves in $M$~\cite{gasparovic2018complete,virk20181}.
However, this relies on knowing the persistent homology ``motifs" produced from simpler spaces, such as curves $C$.
Therefore, our progress in this paper towards understanding the persistent homology of curves also has consequences for the persistent homology of more general classes of spaces, including higher-dimensional manifolds.

Our work motivates the following question: Are there other geometric contexts where computing the persistent homology of the Vietoris--Rips complex of a sample of $n$ points can be similarly improved, from cubic in the number of simplices to a low-degree polynomial in $n$?

\begin{question}
For $X\subseteq \R^2$ arbitrary, is there a cubic or near-quadratic algorithm in the number of vertices $n=|X|$ for determining the $k$-dimensional persistent homology of $\vrc{X}{r}$?
\end{question}

\begin{question}
For $X\subseteq \R^3$ arbitrary, what is the complexity of computing the $k$-dimensional persistent homology of $\vrc{X}{r}$ in terms of the number of vertices $n=|X|$?
\end{question}

We remark that it is NP-hard to compute the homology of the clique complex of an  arbitrary graph; see Theorem~7 of ~\cite{adamaszek2012algorithmic}.
Since every every finite graph can be realized as the unit ball graph of a collection of points in $\R^d$ for $d$ sufficiently large (see~\ref{app:MDS}), it follows that computing the homology (or persistent homology) of Vietoris--Rips complexes in any Euclidean space is NP-hard.
However, to our knowledge this NP-hardness result may not hold in restricted low dimensions, such as $\R^2$ or $\R^3$.

Another motivating question behind this work is the following.
Given a planar subset $X\subseteq\R^2$, is the Vietoris--Rips complex $\vrc{X}{r}$ necessarily homotopy equivalent to a wedge of spheres for all $r\ge 0$?
See Problem~7.3 of~\cite{adamaszek2017homotopy} and Question~5 in Section~2 of~\cite{gasarch2017open}.
Some evidence towards this conjecture is contained in~\cite{Chambers2010} and~\cite{adamaszek2017homotopy}.
Our results show that the conjecture is true in the limited case where $X$ is a subset of a strictly convex closed differentiable curve whose convex hull contains its evolute.

\section{Related work}

Vietoris--Rips complexes were invented independently by Vietoris for use in algebraic topology~\cite{Vietoris27}, and by Rips for use in geometric group theory~\cite{Gromov1987}.
Indeed, Rips proved that if a group $G$ equipped with the word metric is $\delta$-hyperbolic, then $\vrc{G}{r}$ is contractible for $r\ge 4\delta$.
An important theorem by Hausmman~\cite{Hausmann1995} states that if $M$ is a Riemannian manifold, then $\vrc{M}{r}$ is homotopy equivalent to $M$ for scale parameters $r$ sufficiently small (depending on the curvature of $M$).
This theorem has been extended by Latschev~\cite{Latschev2001} to state that if $X$ is a (possibly finite) metric space that is sufficiently close to $M$ in the Gromov--Hausdorff distance, then $\vrc{X}{r}$ is still homotopy equivalent to $M$.

Hausmann's and Latschev's theorems form the theoretical basis for more recent applications of Vietoris--Rips complexes in applied and computational topology~\cite{EdelsbrunnerHarer,Carlsson2009,CarlssonIshkhanovDeSilvaZomorodian2008}.
There are by now a wide variety of reconstruction guarantees---one can use Vietoris--Rips complexes to recover a wide variety of topological properties, such as homotopy type, homology, or fundamental group, from a finite subset drawn from some unknown underlying shape~\cite{AttaliLieutier2014,AttaliLieutierSalinas2013,Carlsson2009,CarlssonIshkhanovDeSilvaZomorodian2008,Chambers2010,chazal2009gromov,ChazalDeSilvaOudot2013,ChazalOudot2008,EdelsbrunnerHarer}.
Several algorithms exist in order for approximating the persistent homology of a Vietoris--Rips complex filtration in a more computationally efficient manner~\cite{bauer2019ripser,bauer2014clear,memoli2018quantitative,sheehy2013linear}, or for collapsing the size of the simplicial complex prior to computing persistent homology~\cite{boissonnat2018computing,botnan2015approximating,dey2014computing,dey2019simba,kerber2017barcodes}.

This paper relies upon and builds upon cyclic graphs, and on the known homotopy types of the Vietoris--Rips complex of the circle~\cite{Adamaszek2013,AA-VRS1,AAFPP-J,AAM}.
Section~\ref{sec:planar} of our paper can be viewed as a generalization of~\cite{AAR,Reddy} from ellipses to a much broader class of planar curves.

\section{Preliminaries on topology}\label{sec:prelims}

We set notation for topological and metric spaces, simplicial complexes, persistent homology, and cyclic graphs.
See~\cite{armstrong2013basic,Hatcher,Kozlov} for background on topological spaces, simplicial complexes, and homology, and~\cite{EdelsbrunnerHarer} for background on Vietoris--Rips complexes and persistent homology.

\subsection*{Topological spaces}
A \emph{topological space} is a set $X$ equipped with a collection of subsets of $X$, called open sets, such that any union of open sets is open, any finite intersection of open sets is open, and both $X$ and the empty set are open.
For $X$ a topological space and $Y\subseteq X$ a subset, we denote the \emph{interior} of $Y$ by $\interior(Y)$ and the \emph{boundary} of $Y$ by $\partial Y$.
Let $I=[0,1]$ denote the unit interval.
We let $S^k$ denote the $k$-dimensional sphere and $\bigvee^m S^k$ denote the $m$-fold wedge sum of $S^k$ with itself.
We write $X\simeq Y$ to denote that spaces $X$ and $Y$ are \emph{homotopy equivalent}, which roughly speaking means that ``they have the same shape up to bending and stretching".

\subsection*{Metric spaces}
A \emph{metric space} is a set $X$ equipped with a distance function $d\colon X\times X\to \R$ satisfying certain properties: nonnegativity, symmetry, the triangle inequality, and the identity of indiscernibles ($d(x,x')=0$ if and only if $x=x'$).
Given a point $x\in X$ and a radius $r>0$, we let $B_X(x,r)=\{y\in X~|~d(x,y)<r\}$ denote the open ball with center $x$ and radius $r$.
Given a metric space $X$, a point $x\in X$, and a set $Y\subseteq X$, we define $d(x,Y)=\inf\{d(x,y)~|~y\in Y\}$.

\subsection*{Simplicial complexes}
A simplex is a generalization of the notion of a vertex, edge, triangle, or tetrahedron to arbitrary dimensions.
Formally, given $k+1$ points $x_0,x_1,\ldots,x_k$ in general position, a simplex of dimension $k$ (a $k$-simplex) is the smallest convex set containing them.
A simplicial complex $K$ on a vertex set $X$ is a collection of subsets (simplices) of $X$, including each element of $X$ as a singleton, such that if $\sigma\in K$ is a simplex and $\tau\subseteq\sigma$ is a face of $\sigma$, then also $\tau\in K$.
We do not distinguish between abstract simplicial complexes (which are combinatorial) and their geometric realizations (which are topological spaces).

\subsection*{Vietoris--Rips complexes}
A \emph{Vietoris--Rips complex $\vrc{X}{r}$} is a simplicial complex, defined from a metric space $X$ and distance $r\ge 0$, by including as a simplex every finite set of points in $X$ that has a diameter at most $r$~\cite{Hausmann1995}.
Said differently, the vertex set of $\vrc{X}{r}$ is $X$, and $\{x_0,x_1,\ldots,x_k\}$ is a simplex when $d(x_i,x_j)\le r$ for all $0\le i,j\le k$.
\begin{figure}[htb]
\centering
\includegraphics[width=0.4\textwidth]{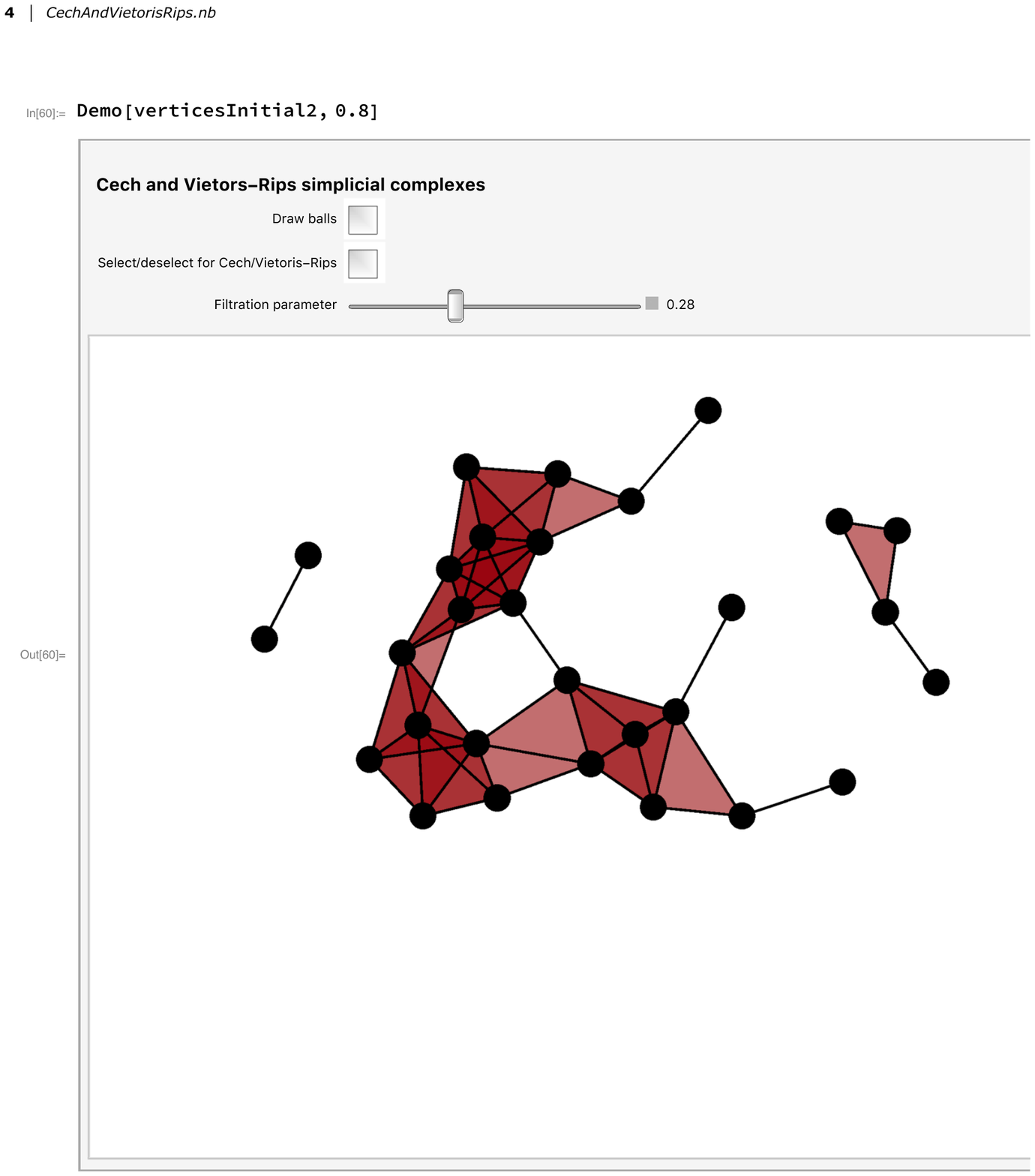}
\hspace{18mm}
\includegraphics[width=0.4\textwidth]{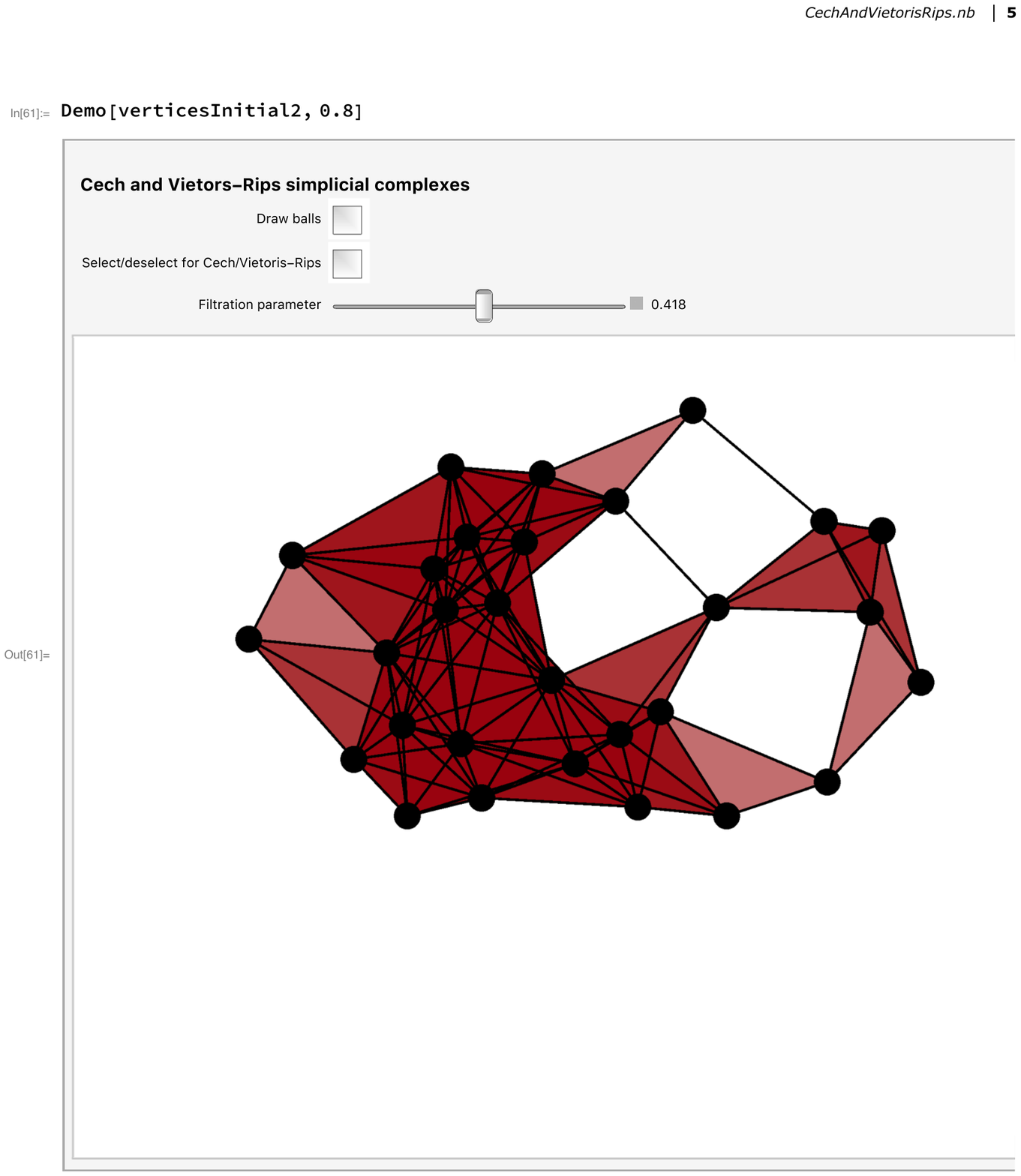}
\caption{Vietoris--Rips complexes, on the same vertex set $X$ at two different choices of scale $r$.}
\label{fig:VR}
\end{figure}

\subsection*{Homology and persistent homology}
Given a topological space $Y$ and an integer $k\ge 0$, the homology group $H_k(Y)$ measures the independent ``$k$-dimensional holes'' in $Y$ (roughly speaking).
For example, $H_0(Y)$ measures the number of connected components, $H_1(Y)$ measures the loops, and $H_2(Y)$ measures the ``2-dimensional voids" in $Y$.

Given an increasing sequence of spaces $Y_1\subseteq Y_2\subseteq\ldots\subseteq Y_M$, persistent homology is a way to ``track the holes" as the spaces get larger.
A common choice for applications is to choose $Y_i=\vrc{X}{r_i}$ to be a Vietoris--Rips complex, where $X$ is a metric space (or data set), and where $r_1<r_2<\ldots<r_M$ an increasing sequence of scale parameters.
We apply the homology functor (with coefficients in a field) in order to get a sequence of vector spaces $H_k(Y_1)\to H_k(Y_2)\to\ldots\to H_k(Y_M)$, which decomposes into a collection of 1-dimensional interval summands.
Each interval corresponds to a $k$-dimensional topological feature that is born and dies at the start and endpoints of the interval.
Our algorithm for persistent homology in Section~\ref{sec:near-quadratic} works simultaneously for all choices of field coefficients.
It also works for integer coefficients (which are much more subtle~\cite{patel2018generalized}), or for persistent homotopy~\cite{frosini2017persistent,letscher2012persistent}, since all of the spaces that appear in our context are homotopy equivalent to wedges of spheres, with controllable maps in-between.

\subsection*{Cyclic graphs and clique complexes}
A \emph{directed graph} $G=(X,E)$ consists of a set of vertices $X$ and edges $E\subseteq X\times X$, where no loops, multiple edges, or edges oriented in opposite directions are allowed.
A \emph{cyclic graph}~\cite{AA-VRS1,AAR} is a directed graph in which the vertex set is equipped with a counterclockwise cyclic order, such that whenever we have three cyclically ordered vertices $x\prec y\prec z\prec x$ and a directed edge $x\to z$, then we also have the directed edges $x\to y$ and $y\to z$.
We say a cyclic graph $G$ is \emph{finite} if its vertex set is finite, and otherwise $G$ is \emph{infinite}.
For example, in the proof of Theorems~\ref{thm:Rd} and~\ref{thm:R2}, we will show that the 1-skeleton of $\vrc{C}{r}$ is an infinite cyclic graph (recall $C$ is a curve).
We give more preliminaries on cyclic graphs in Section~\ref{sec:near-quadratic}, for use in the proof of Theorem~\ref{thm:pers-cyclic}.

\begin{figure}[htb]
\begin{center}
	\begin{tikzpicture}[
		decoration={
			markings,
			mark=at position 1 with {\arrow[scale=1,black]{latex}};
		},every node/.style={draw,circle,inner sep=1pt, font=\tiny}
		]
		\def \n {5}
		\def \radius {1.25cm}
		\def \margin {8} 
		\foreach \s in {0,...,5}
		{\draw (60*\s: \radius) node(\s){};}
		\draw [postaction=decorate] (0) -- (1);
		\draw [postaction=decorate] (1) -- (2);
		\draw [postaction=decorate] (1) -- (3);
		\draw [postaction=decorate] (2) -- (3);
		\draw [postaction=decorate] (2) -- (4);
		\draw [postaction=decorate] (3) -- (4);
		\draw [postaction=decorate] (4) -- (5);
		\draw [postaction=decorate] (5) -- (0);
		\draw [postaction=decorate] (5) -- (1);
	\end{tikzpicture}
	\hspace{5mm}
	\begin{tikzpicture}[
		decoration={
			markings,
			mark=at position 1 with {\arrow[scale=1,black]{latex}};
		},every node/.style={draw,circle,inner sep=1pt, font=\tiny}
		]
		\def \n {5}
		\def \radius {1.25cm}
		\def \margin {8} 
		\foreach \s in {0,...,5}
		{\draw (60*\s: \radius) node(\s){};}
		\draw [postaction=decorate] (0) -- (1);
		\draw [postaction=decorate] (0) -- (2);
		\draw [postaction=decorate] (1) -- (2);
		\draw [postaction=decorate] (1) -- (3);
		\draw [postaction=decorate] (2) -- (3);
		\draw [postaction=decorate] (2) -- (4);
		\draw [postaction=decorate] (3) -- (4);
		\draw [postaction=decorate] (3) -- (5);
		\draw [postaction=decorate] (4) -- (5);
		\draw [postaction=decorate] (4) -- (0);
		\draw [postaction=decorate] (5) -- (0);
		\draw [postaction=decorate] (5) -- (1);
	\end{tikzpicture}
	\hspace{5mm}
	\begin{tikzpicture}[
		decoration={
			markings,
			mark=at position 1 with {\arrow[scale=1,black]{latex}};
		},every node/.style={draw,circle,inner sep=1pt,font=\tiny}
		]
		\def \n {5}
		\def \radius {1.25cm}
		\def \margin {8} 
		\foreach \s in {0,...,8}
		{\draw (40*\s: \radius) node(\s){};}
		\draw [postaction=decorate] (0) -- (1);
		\draw [postaction=decorate] (0) -- (2);
		\draw [postaction=decorate] (0) -- (3);
		\draw [postaction=decorate] (1) -- (2);
		\draw [postaction=decorate] (1) -- (3);
		\draw [postaction=decorate] (1) -- (4);
		\draw [postaction=decorate] (2) -- (3);
		\draw [postaction=decorate] (2) -- (4);
		\draw [postaction=decorate] (2) -- (5);
		\draw [postaction=decorate] (3) -- (4);
		\draw [postaction=decorate] (3) -- (5);
		\draw [postaction=decorate] (3) -- (6);
		\draw [postaction=decorate] (4) -- (5);
		\draw [postaction=decorate] (4) -- (6);
		\draw [postaction=decorate] (4) -- (7);
		\draw [postaction=decorate] (5) -- (6);
		\draw [postaction=decorate] (5) -- (7);
		\draw [postaction=decorate] (5) -- (8);
		\draw [postaction=decorate] (6) -- (7);
		\draw [postaction=decorate] (6) -- (8);
		\draw [postaction=decorate] (6) -- (0);
		\draw [postaction=decorate] (7) -- (8);
		\draw [postaction=decorate] (7) -- (0);
		\draw [postaction=decorate] (7) -- (1);
		\draw [postaction=decorate] (8) -- (0);
		\draw [postaction=decorate] (8) -- (1);
		\draw [postaction=decorate] (8) -- (2);
	\end{tikzpicture}
	\hspace{5mm}
	\begin{tikzpicture}[
		decoration={
			markings,
			mark=at position 1 with {\arrow[scale=1,black]{latex}};
		},every node/.style={draw,circle,inner sep=1pt,font=\tiny}
		]
		\def \n {5}
		\def \radius {1.25cm}
		\def \margin {8} 
		\foreach \s in {0,...,8}
		{\draw (40*\s: \radius) node(\s){};}
		\draw [postaction=decorate] (0) -- (1);
		\draw [postaction=decorate] (0) -- (2);
		\draw [postaction=decorate] (0) -- (3);
		\draw [postaction=decorate] (0) -- (4);
		\draw [postaction=decorate] (1) -- (2);
		\draw [postaction=decorate] (1) -- (3);
		\draw [postaction=decorate] (1) -- (4);
		\draw [postaction=decorate] (2) -- (3);
		\draw [postaction=decorate] (2) -- (4);
		\draw [postaction=decorate] (2) -- (5);
		\draw [postaction=decorate] (3) -- (4);
		\draw [postaction=decorate] (3) -- (5);
		\draw [postaction=decorate] (3) -- (6);
		\draw [postaction=decorate] (4) -- (5);
		\draw [postaction=decorate] (4) -- (6);
		\draw [postaction=decorate] (4) -- (7);
		\draw [postaction=decorate] (5) -- (6);
		\draw [postaction=decorate] (5) -- (7);
		\draw [postaction=decorate] (5) -- (8);
		\draw [postaction=decorate] (6) -- (7);
		\draw [postaction=decorate] (6) -- (8);
		\draw [postaction=decorate] (6) -- (0);
		\draw [postaction=decorate] (7) -- (8);
		\draw [postaction=decorate] (7) -- (0);
		\draw [postaction=decorate] (7) -- (1);
		\draw [postaction=decorate] (7) -- (2);
		\draw [postaction=decorate] (8) -- (0);
		\draw [postaction=decorate] (8) -- (1);
		\draw [postaction=decorate] (8) -- (2);
		\draw [postaction=decorate] (8) -- (3);
	\end{tikzpicture}
\end{center}
\caption{Four example cyclic graphs.
The homotopy types of their clique complexes, from left to right, are $S^1$, $S^2$, $\bigvee^2 S^2$, and $S^3$ (since, as will be explained in Section~\ref{sec:near-quadratic}, the \emph{winding fractions} are $\frac{1}{4}$, $\frac{1}{3}$, $\frac{1}{3}$, and $\frac{3}{8}$, with the graphs having $1$, $2$, $3$, and $1$ \emph{periodic orbit}(\emph{s})).}
\label{fig:cyclic1}
\end{figure}
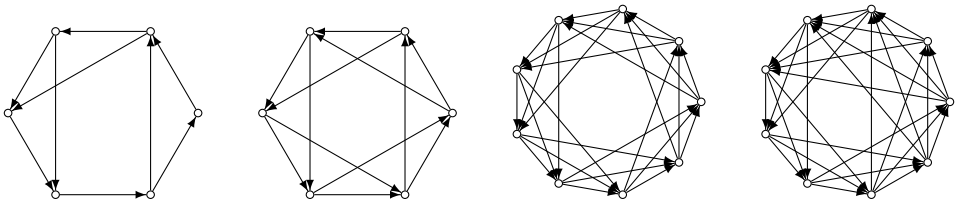

The \emph{clique complex} of a (directed or undirected) graph $G$ (with no loops or multiple edges), denoted $\cl(G)$, is the largest simplicial complex that contains $G$ as its 1-skeleton.
Note, for example, that a Vietoris--Rips complex is the clique complex of its 1-skeleton.
If a graph $G$ is a \emph{cone}, meaning that some vertex $x$ shares an edge with every other vertex of $G$, then the clique complex $\cl(G)$ is contractible.

\section{Persistent homology of cyclic graphs}\label{sec:near-quadratic}

Theorem~\ref{thm:pers-cyclic} states that given any increasing sequence of cyclic graphs $G_1\subseteq G_2\subseteq \ldots \subseteq G_M$, we can compute the $k$-dimensional persistent homology of the resulting increasing sequence of clique complexes $\cl(G_1)\subseteq \cl(G_2)\subseteq \ldots \subseteq \cl(G_M)$ in running time $O(n^2(k+\log n))$.
Before proving Theorem~\ref{thm:pers-cyclic}, we provide some further background on cyclic graphs and their associated dynamical systems.
We then give the algorithm for computing even-dimensional persistent homology, followed by the algorithm for odd homological dimensions.

\subsection{Cyclic dynamical systems and winding fractions}\label{ss:cyclic}

Let $G$ be a finite cyclic graph with vertex set $X$.
The associated \emph{cyclic dynamical system} is generated by the dynamics $f\colon X\to X$, where we assign $f(x)$ to be the vertex $y$ with a directed edge $x\to y$ that is counterclockwise furthest from $x$ (else $f(x)=x$ if $x$ is not the source vertex of any directed edges in $G$).
Since $X$ is finite, the dynamical system $f\colon X\to X$ necessarily has at least one periodic orbit.
By Lemma~2.3 of~\cite{AAM} or Lemma~3.4 of~\cite{ACJS}, every periodic orbit of $f$ has the same length $\ell$ and winding number $\omega$ (the number of times a periodic orbit $x\to f(x)\to f^2(x)\to\ldots \to f^\ell(x)=x$ wraps around the cyclic ordering on $X$).
We define the \emph{winding fraction} of $G$ to be $\wf(G)=\frac{\omega}{\ell}$.

\begin{figure}[htb]
\begin{center}
	\begin{tikzpicture}[
		decoration={
			markings,
			mark=at position 1 with {\arrow[scale=1,black]{latex}};
		},every node/.style={draw,circle,inner sep=1pt,font=\tiny}
		]
		\def \n {5}
		\def \radius {1.25cm}
		\def \margin {8} 
		\foreach \s in {0,...,8}
		{\draw (40*\s: \radius) node(\s){};}
		\draw [postaction=decorate] (0) -- (1);
		\draw [postaction=decorate] (0) -- (2);
		\draw [postaction=decorate] (0) -- (3);
		\draw [postaction=decorate] (1) -- (2);
		\draw [postaction=decorate] (1) -- (3);
		\draw [postaction=decorate] (1) -- (4);
		\draw [postaction=decorate] (2) -- (3);
		\draw [postaction=decorate] (2) -- (4);
		\draw [postaction=decorate] (2) -- (5);
		\draw [postaction=decorate] (3) -- (4);
		\draw [postaction=decorate] (3) -- (5);
		\draw [postaction=decorate] (3) -- (6);
		\draw [postaction=decorate] (4) -- (5);
		\draw [postaction=decorate] (4) -- (6);
		\draw [postaction=decorate] (4) -- (7);
		\draw [postaction=decorate] (5) -- (6);
		\draw [postaction=decorate] (5) -- (7);
		\draw [postaction=decorate] (5) -- (8);
		\draw [postaction=decorate] (6) -- (7);
		\draw [postaction=decorate] (6) -- (8);
		\draw [postaction=decorate] (6) -- (0);
		\draw [postaction=decorate] (7) -- (8);
		\draw [postaction=decorate] (7) -- (0);
		\draw [postaction=decorate] (7) -- (1);
		\draw [postaction=decorate] (8) -- (0);
		\draw [postaction=decorate] (8) -- (1);
		\draw [postaction=decorate] (8) -- (2);
	\end{tikzpicture}
	\hspace{0mm}
	\begin{tikzpicture}[
		decoration={
			markings,
			mark=at position 1 with {\arrow[scale=1,black]{latex}};
		},every node/.style={draw,circle,inner sep=1pt,font=\tiny}
		]
		\def \n {5}
		\def \radius {1.25cm}
		\def \margin {8} 
		\foreach \s in {0,...,8}
		{\draw (40*\s: \radius) node(\s){};}
		\draw [postaction=decorate] (0) -- (3);
		\draw [postaction=decorate] (1) -- (4);
		\draw [postaction=decorate] (2) -- (5);
		\draw [postaction=decorate] (3) -- (6);
		\draw [postaction=decorate] (4) -- (7);
		\draw [postaction=decorate] (5) -- (8);
		\draw [postaction=decorate] (6) -- (0);
		\draw [postaction=decorate] (7) -- (1);
		\draw [postaction=decorate] (8) -- (2);
	\end{tikzpicture}
	\hspace{19mm}
	\begin{tikzpicture}[
		decoration={
			markings,
			mark=at position 1 with {\arrow[scale=1,black]{latex}};
		},every node/.style={draw,circle,inner sep=1pt,font=\tiny}
		]
		\def \n {5}
		\def \radius {1.25cm}
		\def \margin {8} 
		\foreach \s in {0,...,8}
		{\draw (40*\s: \radius) node(\s){};}
		\draw [postaction=decorate] (0) -- (1);
		\draw [postaction=decorate] (0) -- (2);
		\draw [postaction=decorate] (0) -- (3);
		\draw [postaction=decorate] (0) -- (4);
		\draw [postaction=decorate] (1) -- (2);
		\draw [postaction=decorate] (1) -- (3);
		\draw [postaction=decorate] (1) -- (4);
		\draw [postaction=decorate] (2) -- (3);
		\draw [postaction=decorate] (2) -- (4);
		\draw [postaction=decorate] (2) -- (5);
		\draw [postaction=decorate] (3) -- (4);
		\draw [postaction=decorate] (3) -- (5);
		\draw [postaction=decorate] (3) -- (6);
		\draw [postaction=decorate] (4) -- (5);
		\draw [postaction=decorate] (4) -- (6);
		\draw [postaction=decorate] (4) -- (7);
		\draw [postaction=decorate] (5) -- (6);
		\draw [postaction=decorate] (5) -- (7);
		\draw [postaction=decorate] (5) -- (8);
		\draw [postaction=decorate] (6) -- (7);
		\draw [postaction=decorate] (6) -- (8);
		\draw [postaction=decorate] (6) -- (0);
		\draw [postaction=decorate] (7) -- (8);
		\draw [postaction=decorate] (7) -- (0);
		\draw [postaction=decorate] (7) -- (1);
		\draw [postaction=decorate] (7) -- (2);
		\draw [postaction=decorate] (8) -- (0);
		\draw [postaction=decorate] (8) -- (1);
		\draw [postaction=decorate] (8) -- (2);
		\draw [postaction=decorate] (8) -- (3);
	\end{tikzpicture}
	\hspace{0mm}
	\begin{tikzpicture}[
		decoration={
			markings,
			mark=at position 1 with {\arrow[scale=1,black]{latex}};
		},every node/.style={draw,circle,inner sep=1pt,font=\tiny}
		]
		\def \n {5}
		\def \radius {1.25cm}
		\def \margin {8} 
		\foreach \s in {0,...,8}
		{\draw (40*\s: \radius) node(\s){};}
		\draw [postaction=decorate] (0) -- (4);
		\draw [postaction=decorate] (1) -- (4);
		\draw [postaction=decorate] (2) -- (5);
		\draw [postaction=decorate] (3) -- (6);
		\draw [postaction=decorate] (4) -- (7);
		\draw [postaction=decorate] (5) -- (8);
		\draw [postaction=decorate] (6) -- (0);
		\draw [postaction=decorate] (7) -- (2);
		\draw [postaction=decorate] (8) -- (3);
	\end{tikzpicture}
\end{center}
\caption{Two example cyclic graphs and their associated dynamical systems $f\colon X\to X$.
(Left) This cyclic graph $G$ has a dynamical system with 3 periodic orbits, each of winding fraction $\frac{1}{3}$, giving $\cl(G)\simeq \bigvee^2 S^2$.
(Right) This cyclic graph $G$ has a dynamical system with a single periodic orbit of winding fraction $\frac{3}{8}$, giving $\cl(G)\simeq S^3$ since $\frac{1}{3}<\frac{3}{8}<\frac{2}{5}$.
}
\label{fig:cyclic2}
\end{figure}
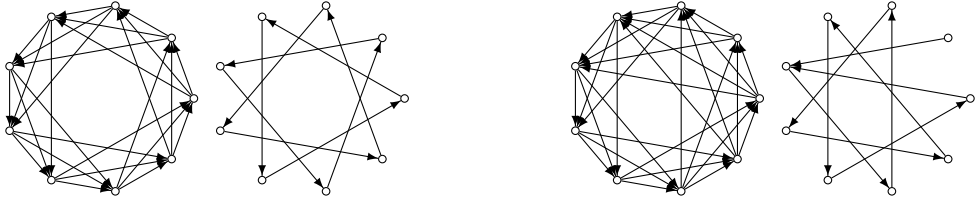

Let $P$ be the number of periodic orbits in a finite cyclic graph $G$.
By Proposition~4.1 of~\cite{AAR}, we have
\[ \cl(G)\simeq\begin{cases}
S^{2k+1}&\mbox{if }\tfrac{k}{2k+1}<\wf(G)<\tfrac{k+1}{2k+3}\\
\bigvee^{P-1}S^{2k}&\mbox{if }\wf(G)=\tfrac{k}{2k+1}
\end{cases}\mbox{ for some }k\in \N.\]
Furthermore, as described in Section~\ref{ss:near-quadratic-PH}, given an inclusion of cyclic graphs $G\hookrightarrow G'$, we have strong results determining the topology of the map $\cl(G)\hookrightarrow\cl(G')$.

Even when $G$ is an infinite cyclic graph, Theorems~5.1 and 5.2 of~\cite{AAR} combine to state that the homotopy type of $\cl(G)$ is either an odd-dimensional sphere or a wedge sum of even-dimensional spheres of the same dimension.

\subsection{Even-dimensionsional homology}\label{ss:even}

Let $G$ be a finite cyclic graph with vertex set $X$ of size $n$.
Let $E$ be a list of all directed edges, in sorted order from first to last, such that the subgraph of $G$ formed by including any initial segment of edges in $E$ is also a cyclic graph.
We show how to compute the persistent homology of the clique complexes of the subgraphs of $G$ formed by starting with vertex set $X$, and then adding edges one at a time in sorted order.
To compute $2k$-dimensional homology, we must count the number of periodic orbits with winding fraction $\frac{k}{2k+1}$.
Let $B$ be an array of $n$ booleans storing whether each vertex is periodic or not (true means periodic, false means nonperiodic).
Initialize each entry of $B$ to be false (unless $k=0$, in which case every vertex is periodic with winding fraction 0 prior to adding any edges).
Let $P$ be the number of vertices that are periodic with winding fraction $\frac{k}{2k+1}$, initialized to be 0 (again unless $k=0$).
Let $f$ be an array of $n$ integers, where $f[i] = j$ when the cyclic dynamical system maps the $i$-th vertex to the $j$-th vertex.
Initialize $f$ so that $f[i] = i$ for all $i$.
For each new edge, in order of appearance, we must update whether some vertices are periodic with winding fraction $\frac{k}{2k+1}$.
If the source vertex of the new edge is periodic (look it up in $B$), walk along $f$, marking every vertex along this (old) periodic orbit as non-periodic in $B$.
Decrement $P$ by one.
Now, update $f$ to add the new edge.
Walk $2k+1$ steps along $f$ starting from the source vertex of the new edge.
If we get back to where we started, after looping $k$ times around, then re-walk along this newly found periodic orbit, marking all the vertices as periodic in $B$, and increment $P$ by one.
We're now done updating $f$, $B$, and $P$ for the new edge.
Then, $P-1$ is the number of $2k$-dimensional spheres in the homotopy type of $\cl(G)$.
Thus, we now know what the homology groups are.
See Section~\ref{ss:near-quadratic-PH} for an explanation of how to recover not only the homology at each stage, but also the persistent homology (i.e., the maps between homology groups induced by inclusions).

We now explain why this algorithm works.
The algorithm has a loop invariant that $f$, $B$, and $P$ are correct.
When we add a new larger step $x\mapsto f(x)$ to the cyclic dynamics $f$, we are implicitly removing a prior smaller step from $x$.
This prior step was either part of a periodic orbit or not.
If it was on a periodic orbit, we update $f$, $B$, and $P$ to account for destroying this periodic orbit.
When we add the new step $x\mapsto f(x)$, it either creates a new periodic orbit or not.
We check to see if it does, and, if so, we update $f$, $P$, and $B$ accordingly.
Thus, the loop invariant is maintained throughout the execution of the algorithm.
Since the $2k$-dimensional homology of the clique complex of $G$ is determined purely by the number of periodic orbits of winding fraction $\frac{k}{2k+1}$, this algorithm produces the correct homology at each stage.

\subsection{Pseudocode for even-dimensional homology}\label{ss:even-code}

The following pseudocode for the even-dimensional persistent homology algorithm described above accepts as inputs the vertex set $X$, the sorted list $E$ of directed edges, and $k$.
It computes the $2k$-dimensional homology of the increasing sequence of clique complexes.
We let $n=|X|$.

\begin{verbatim}
function computeEvenDimensionalHomology(X, E, k):
  set numPeriodicOrbits to 0 (unless k=0)
  set isPeriodic to an array of length n, filled with 0's (unless k=0)
  set f to an array of length n, where the i-th entry is i
  set edges to a sorted list of all edges between points in X
  for edge in edges:
    if isPeriodic[edge.sourceVertex]:
      walk along the periodic orbit, marking each vertex nonperiodic
      numPeriodicOrbits -= 1
    edit f to add the new edge
    walk around f, starting at edge.sourceVertex, taking 2k+1 steps
    if we returned to edge.sourceVertex, and looped around k times:
      walk along the new periodic orbit, marking each vertex periodic
      numPeriodicOrbits += 1
    print the number (numPeriodicOrbits-1) of 2k-spheres
\end{verbatim}
See Section~\ref{ss:near-quadratic-PH} for an explanation of how to recover not only the homology at each stage, but also the persistent homology.

\subsection{Odd-dimensional homology}\label{ss:odd}

If we're interested in ($2k+1$)-dimensional homology, run the algorithm for even-dimensional homology in dimensions $2k$ and $2k+2$.
There is only ever at most one ($2k+1$)-dimensional sphere.
This sphere is born when the last periodic orbit with winding fraction $\frac{k}{2k+1}$ is destroyed (at which point the winding fraction first exceeds $\frac{k}{2k+1}$), and this sphere dies when the first periodic orbit with winding fraction $\frac{k+1}{2k+3}$ is created.

\subsection{Computational complexity}\label{ss:complexity}

Computing the list of edges in sorted order takes $O(n^2 \log(n^2)) = O(n^2 \log n)$ time.
Walking along the length of a periodic orbit takes $O(k)$ time.
We walk along the length of a periodic orbit a constant number of times for each edge.
Since there are $O(n^2)$ edges, this takes $O(n^2 k)$ time.
Thus, the total runtime is $O(n^2(k+\log n))$.

\subsection{Determining the persistent homology maps}\label{ss:near-quadratic-PH}

In Sections~\ref{ss:even}--\ref{ss:odd} we provide an algorithm to compute the homology groups (and indeed the homotopy types) of an increasing sequence of clique complexes of cylic graphs $\cl(G_1)\subseteq \cl(G_2)\subseteq \ldots \subseteq \cl(G_M)$.
From these computations, it is not difficult to determine the persistent homology information, i.e., the maps on homology induced by inclusions.
Indeed, if $\cl(G_i)$ and $\cl(G_{i+1})$ are not spheres (or wedge sums of spheres) of matching dimensions, then necessarily the inclusion $\cl(G_i)\hookrightarrow\cl(G_{i+1})$ induces the zero map on homology in all homological dimensions.
Furthermore, if $\cl(G_i)$ and $\cl(G_{i+1})$ are each homotopy equivalent to $S^{2k+1}$, then it follows from Proposition~4.9 of~\cite{AA-VRS1} that the induced map $\cl(G_i)\hookrightarrow\cl(G_{i+1})$ is a homotopy equivalence.

For the case of even-dimensional homology, we rely on Proposition~4.2 of~\cite{AAR}.
Each interval in the $2k$-dimensional persistent homology barcode will be labeled with a periodic orbit of winding fraction $\frac{k}{2k+1}$.
Upon adding a new directed edge with source vertex $x$, there are four cases in the algorithm for even-dimensional homology in Section~\ref{ss:even}.

\begin{enumerate}
\item If $x$ was previously nonperiodic but became periodic, then unless this is the very first periodic orbit of winding fraction $\frac{k}{2k+1}$ to appear\footnote{In homological dimension 0 we are using reduced homology.}, begin a new persistent homology interval in homological dimension $2k$.
Label this interval with the periodic orbit containing $x$.
\item If $x$ was previously nonperiodic and remains nonperiodic, then no updates are needed.
All persistent homology intervals continue.
\item If $x$ was previously periodic and remains periodic (necessarily with a new periodic orbit), then simply update the periodic orbit label of this persistent homology interval.
All persistent homology intervals continue.
\item If $x$ was in a periodic orbit $\mathcal{O}$ and becomes nonperiodic, then under repeated applications of $f$, the vertex $x$ now maps into a different periodic orbit $\mathcal{O}'$.
End the persistent homology interval corresponding to exactly one of the periodic orbits $\mathcal{O}$ or $\mathcal{O}'$ according to the elder rule: end the persistent homology interval that was born more recently, and (re)label the persistent homology interval that continues with the periodic orbit $\mathcal{O}'$.
All other persistent homology intervals continue unchanged.
\end{enumerate}

\subsection{Adding vertices}
If the sequence of cyclic graphs do not all have the same vertex set, we can use a small modification to the algorithm above to still compute the winding fractions and periodic orbits.
We break the sequence of cyclic graphs into a sequence of operations, either adding a new edge, or adding a new vertex (and all of the corresponding edges for the new vertex which may be required to maintain cyclicity).
We already know how to add a new edge.
When we add a new vertex, all we must do is add the outgoing edges, which can be done in $O(n)$ time by walking around the convex hull.
Furthermore, we must add any new outgoing edges that end in the new vertex, which can also be done in $O(n)$ time by walking around the convex hull.

\section{Curves in Euclidean space}\label{sec:Rd}

In this section, we consider curves, equipped with the Euclidean metric, in Euclidean space of arbitrary dimension.

Let $C\subseteq \R^d$ be a curve homeomorphic to the circle.
Fix $r_C>0$.
Suppose that for all $p\in C$ and $0\le r\le r_C$, the intersection $B(p,r_C)\cap C$ is a connected arc.
Theorem~\ref{thm:Rd} states that the 1-skeleton of $\vrc{C}{r}$ is a cyclic graph\footnote{When equipped with the obvious cyclic ordering on $C$.} for all $0\le r\le r_C$.

\begin{proof}[Proof of Theorem~\ref{thm:Rd}]
Fix a homeomorphism between $C$ and the circle, allowing us to discuss counterclockwise and clockwise directions on $C$.
This is the cyclic ordering under which $\vrc{C}{r}$ may or may not be cyclic.

Let $0\le r\le r_C$; we need to assign orientations to each edge of $\vrc{C}{r}$ that endow it with the structure of a cyclic graph.
Recall that $B(p,r_C)\cap C$ is a connected arc.
For all $q\in C$ with $\|p-q\|\le r$ we assign the orientation $p\to q$ (resp.\ $q\to p$) to this edge in $\vrc{C}{r}$ if $q$ is in the portion of the interval $B(p,r) \cap C$ counterclockwise (resp.\ clockwise) of $p$.
An edge cannot be equipped with both orientations (from looking both at balls $B(p,r_C)$ and at $B(q,r_C)$), since that would imply that $B(p,r_C)\cap C = C = B(q,r_C)\cap C$, contradicting our hypothesis that each such intersection is homeomorphic to an arc.

We claim that these assignments of orientation give a cyclic graph structure on $\vrc{C}{r}$ for any $r\le r_C$.
Indeed, let $x \to z$ be in the 1-skeleton of $\vrc{X}{r}$.
Then the arc $B(x,r)\cap C$ has $z$ in the portion counterclockwise of $x$, and dually the arc $B(z,r)\cap C$ has $x$ in the portion clockwise of $z$.
Suppose $y\in C$ satisfies $x \prec y \prec z \prec x$.
Then $y$ is also in the portion of $B(x,r)\cap C$ counterclockwise of $x$, giving the directed edge $x\to y$.
Similarly $y$ is also in the portion of $B(z,r)\cap C$ clockwise of $z$, giving the directed edge $y\to z$.
Thus $\vrc{X}{r}$ is a cyclic graph for all $r\le r_C$.
\end{proof}

\begin{figure}[htb]
\centering
\includegraphics[scale=0.5]{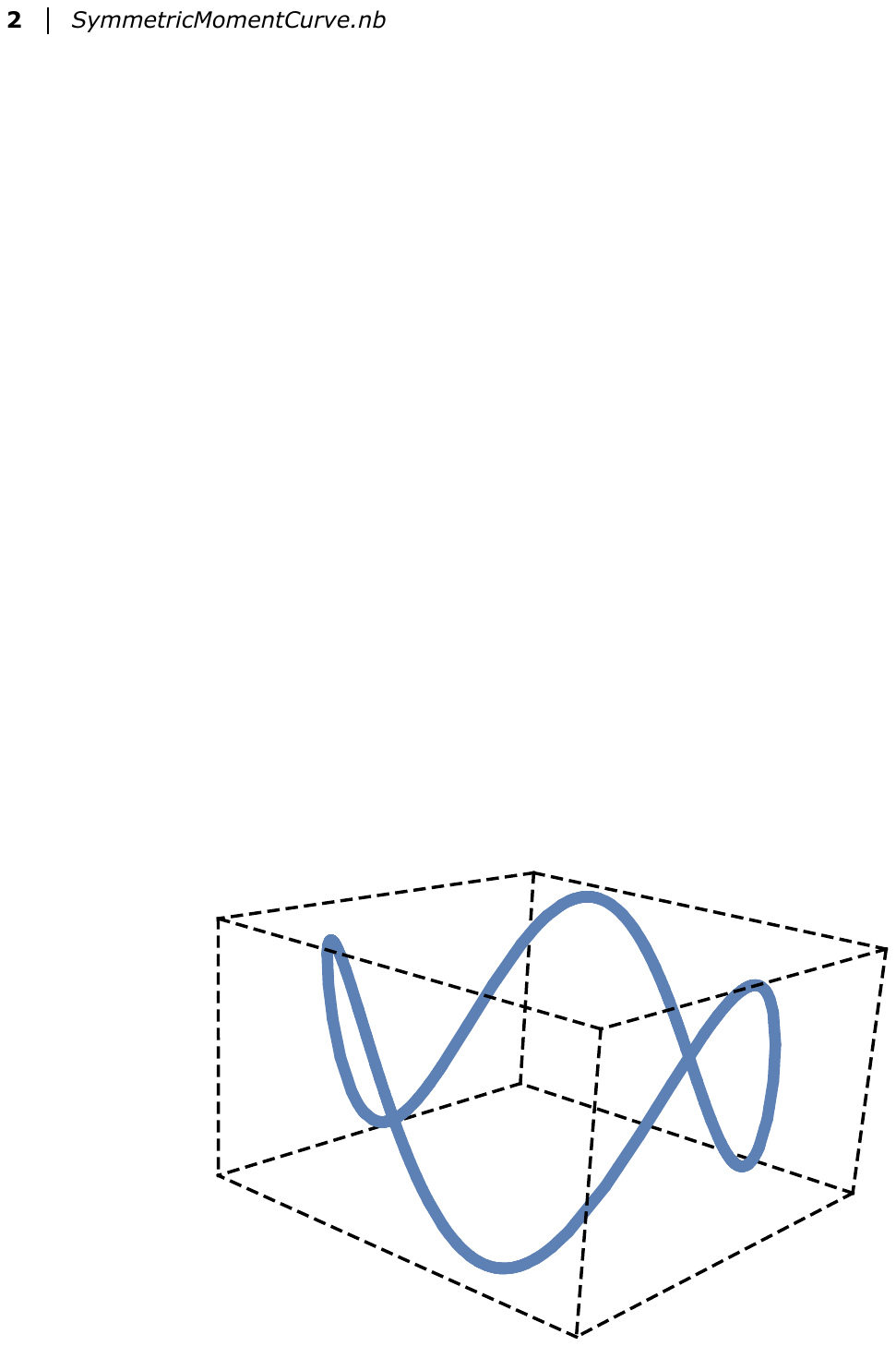}
\hspace{15mm}
\includegraphics[scale=0.5]{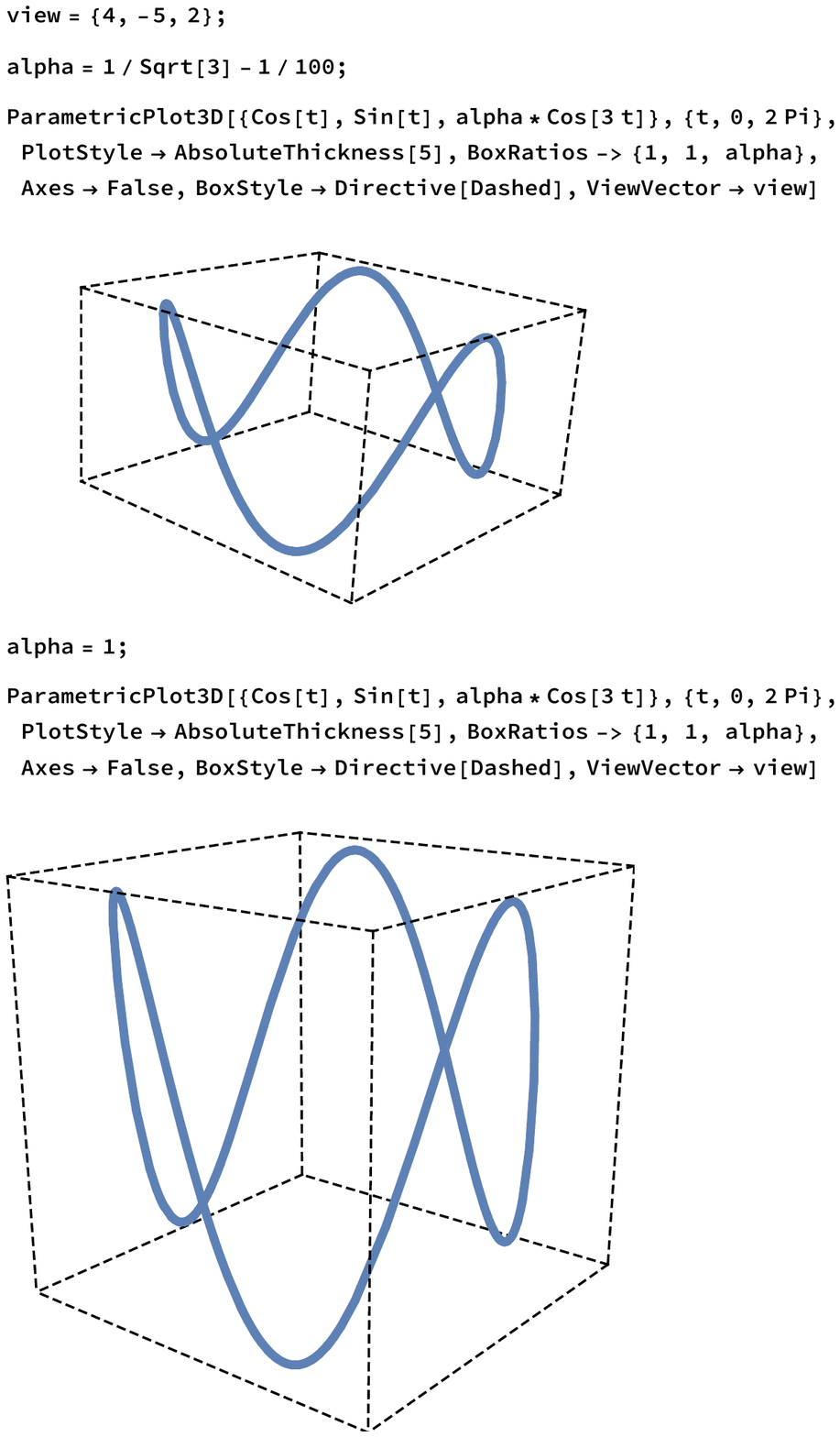}
\caption{
Projections onto the first three coordinates of the scaled symmetric moment curve $f(t)=(\cos t, \sin t, \alpha \cos 3t, \alpha \sin 3t)\in\R^4$.
(Left) For $\alpha=\frac{1}{\sqrt{3}}-\frac{1}{100}<\frac{1}{\sqrt{3}}$,
Theorem~\ref{thm:Rd} holds for all $r$ up to the diameter of $C=\im(f)$, after which the Vietoris--Rips complex is contractible.
(Right) For $\alpha=1$, Theorem~\ref{thm:Rd} holds only up to a scale smaller than the diameter of $C$.
}
\label{fig:symmetric-moment-curve}
\end{figure}

\begin{example}\label{ex:symmetric}
Let $f\colon S^1\to \R^4$ be the scaled symmetric moment curve defined by $f(t)=(\cos t, \sin t, \alpha \cos 3t, \alpha \sin 3t)$, where $\alpha\in\R$ is a constant (see Figure~\ref{fig:symmetric-moment-curve}).
Suppose $\alpha<\frac{1}{\sqrt{3}}$.
We then show in~\ref{app:symmetric} that $C=\im(f)$ satisifies the hypotheses of Theorem~\ref{thm:Rd} for all scale parameters up until the diameter of $C$.
Hence for all scales $r$, $\vrc{C}{r}$ is homotopy equivalent to either an odd-sphere or a wedge of even spheres\footnote{Or $\vrc{C}{r}$ is contractible, i.e.\ homotopy equivalent to a point, but we think of this as the 0-fold wedge sum of 0-spheres.}.
We remark that this example applies to the multidimensional scaling~\cite{cox2000multidimensional} embedding of the geodesic circle in $\R^4$, which is obtained by choosing $\alpha=\frac{1}{3}<\frac{1}{\sqrt{3}}$; see~\cite{MDScircle,von1941fourier}.
\end{example}

Let $C\subseteq \R^d$ be a $C^2$ curve that is homeomorphic to the circle $S^1$.
Let $\reach(C)$ denote the \emph{reach} of $C$, i.e., the distance from $C$ to its medial axis.
Proposition~13 of~\cite{boissonnat2017reach} implies that $B(x,r)\cap C$ is a topological 1-dimensional ball (an open interval) for all $p\in C$ and $0\le r<\reach(C)$.\footnote{Every $C^2$ curve is also $C^{1,1}$, i.e.\ differentiable with locally Lipschitz partial derivatives, for example by (1.70) of~\cite{fiorenza2017holder}.}
Hence Theorem~\ref{thm:Rd} implies that $\vrc{C}{r}$ is a cyclic graph for all $0\le r<\reach(C)$.
However, $r<\reach(C)$ is too small for interesting topology (homology in dimension above one) to occur in $\vrc{C}{r}$.
What is more interesting is when the hypothesis of Theorem~\ref{thm:Rd} are satisfied for a curve $C$ in $\R^d$ at scale parameters much larger than the reach.
Interesting examples when Theorem~\ref{thm:Rd} applies for all scales $r\ge 0$ up until the diameter of the curve include for example when $C$ is a circle, when $C$ is a scaled symmetric moment curve as in Example~\ref{ex:symmetric}, or when $C$ is a strictly convex closed differentiable planar curve whose convex hull contains its evolute (Theorem~\ref{thm:R2}), such as an ellipse that is not too eccentric (Example~\ref{ex:ellipse}).

\section{Planar curves}\label{sec:planar}

For convex curves in the plane, we can give a nice criterion so that Theorem~\ref{thm:Rd} applies whenever the Vietoris--Rips complex is not a cone (and hence contractible).
We first set notation for convex curves and evolutes.

\subsection*{Strictly convex curves}
A set $Y\subseteq\R^2$ is \emph{strictly convex} if for all $y,y'\in Y$ and $t\in(0,1)$, the point $ty+(1-t)y'$ is in the interior of $Y$.
We say a curve $C\subseteq\R^2$ is \emph{strictly convex} if $C=\partial Y$ for some strictly convex set $Y\subseteq\R^2$.
If $L$ is a line and $C$ is strictly convex curve in $\R^2$ that intersect transversely, then $L$ and $C$ intersect in either zero or two points; see for example~\cite{bar2010elementary,do2016differential,girko2016spectral,petersen_2016}.

\subsection*{Tangent vectors, normal vectors, and evolutes}
Let $\alpha\colon I\to \R^2$ be a differentiable curve in the plane.
Then $\alpha'(t)$ is the tangent vector to $\alpha$ at time $t$, and we denote the corresponding \emph{unit tangent vector} by $T(\alpha(t))=\frac{\alpha'(t)}{\|\alpha'(t)\|}$.
The \emph{unit normal vector}, which is perpendicular to $T(\alpha(t))$ and points in the direction the curve is turning, is given by $n(\alpha(t))=\frac{T'(\alpha(t))}{\|T'(\alpha(t))\|}$.

The \emph{curvature} of $\alpha$ at time $t$ is $\kappa(t)=\frac{\|T'(t)\|}{\|\alpha'(t)\|}$, with corresponding \emph{radius of curvature} $r(t)=\frac{1}{\kappa(t)}$.
The \emph{center of curvature} is the point on the inner normal line to $\alpha(t)$ at distance equal to the radius of curvature away, given by $x_\alpha(t)=\alpha(t)+\frac{1}{\kappa(t)}n(\alpha(t))$.
The \emph{evolute} of a curve is the envelope of the normals, or equivalently, the set of all centers of curvature.


\medskip

We now restrict attention to planar curves $C\subseteq \R^2$ that are strictly convex, closed, differentiable, and equipped with the Euclidean metric.
Our Theorem~\ref{thm:R2} has one extra the hypothesis that the convex hull of $C$ contains the evolute of $C$; this hypothesis could potentially be rephrased in terms of the \emph{symmetry set} of $C$, which is the closure of the set of circles tangent to $C$ in at least two points.
Theorem~\ref{thm:R2} implies that the 1-skeleton of $\vrc{C}{r}$ is a cyclic graph or a cone for all $r\ge 0$.
We provide an example before building to the proof of Theorem~\ref{thm:R2}.

\begin{figure}[htb]
\centering
\includegraphics[width=0.8\textwidth]{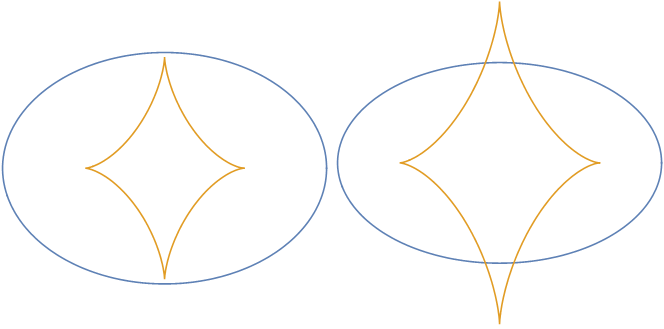}
\caption{
Evolutes of ellipses $C=\left\{\left(a\cos t, b\sin t\right)~|~t\in \R\right\} \subseteq \R^2$.
On the left we have $\frac{a}{b}<\sqrt{2}$, and the convex hull of $C$ contains the evolute of $C$.
On the right we have $\frac{a}{b}>\sqrt{2}$.
}
\label{fig:ellipse-evolute}
\end{figure}

\begin{example}\label{ex:ellipse}
Let $C=\left\{\left(a\cos t, b\sin t\right)~|~t\in \R\right\} \subseteq \R^2$ be an ellipse, where we assume $a>b>0$.
The evolute of $C$ is given by the points $\left\{\left(\tfrac{a^2-b^2}{a}\cos^3{t},\tfrac{b^2-a^2}{b}\sin^3{t}\right)\right\}\subseteq \R^2$.
One can show that the convex hull of $C$ contains its evolute if and only if $\frac{a}{b}\leq\sqrt[]{2}$.
Indeed, for one direction, note that the point $(0,\frac{b^2-a^2}{b})$ on the evolute (corresponding to $t=\frac{\pi}{2}$) is contained in the convex hull of $C$ if and only if $-b \leq \frac{b^2-a^2}{b}$, meaning $a^2 \leq 2b^2$, or equivalently $\frac{a}{b} \leq \sqrt{2}$.
Hence the convex hull of $C$ contains its evolute if and only if $C$ is an ``ellipse of small eccentricity"~\cite{AAR}, showing that our work generalizes~\cite{AAR} to a broader class of curves.
\end{example}

Let $C$ be a strictly convex closed differentiable curve in the plane.

\begin{definition}\label{def:h}
Define the continuous function $h\colon C \to C$ which maps a point $p \in C$ to the unique point in the intersection of the normal line to $C$ at $p$ with $C\setminus\{p\}$.
\end{definition}

\begin{lemma}
The map $h\colon C\to C$ is of degree one, which implies that $h$ is surjective.
\end{lemma}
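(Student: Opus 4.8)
The plan is to show that $h$ is homotopic to the identity map on $C$, from which degree one follows immediately, and surjectivity is then automatic since any map $S^1 \to S^1$ of nonzero degree is surjective. The key geometric input is that for a strictly convex closed differentiable curve, the normal line at $p$ meets $C$ in exactly one other point (this is what makes $h$ well-defined in Definition~\ref{def:h}), and moreover this second intersection point depends continuously on $p$ and never coincides with $p$ itself.

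First I would set up the homotopy. Fix a homeomorphism $C \cong S^1$, so that we may speak of arcs, and parametrize $C$ by a differentiable map $\gamma\colon \R/L\Z \to \R^2$ (arc length, say). For each $p \in C$, the two points $p$ and $h(p)$ are distinct, so they divide $C$ into two arcs; the normal line through $p$ separates the plane, with $p$ and $h(p)$ the two crossing points of $C$. The idea is that $h$ never ``flips orientation'': as $p$ traverses $C$ once counterclockwise, $h(p)$ also traverses $C$ once counterclockwise. To make this precise without computing a degree by hand, I would instead directly build a homotopy $H\colon C \times [0,1] \to C$ from $\mathrm{id}_C$ to $h$. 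One natural choice: let $H(p,s)$ be the point obtained by moving from $p$ to $h(p)$ along the shorter (or a consistently chosen) arc — but this requires knowing $p$ and $h(p)$ are never antipodal in a way that makes the choice discontinuous, so more robustly, use the straight-line homotopy in $\R^2$ composed with a retraction. Concretely, since $C$ is strictly convex, it bounds a convex region, and there is a deformation retraction of a neighborhood; alternatively, consider the map $(p,s) \mapsto (1-s)p + s\,h(p) \in \R^2$, which for $s \in (0,1)$ traces the chord interior to the convex region, hence misses $C$ except at the endpoints — this does not land in $C$. So the cleanest route is: because $h$ has no fixed point (the normal line meets $C\setminus\{p\}$, so $h(p)\neq p$ always), the map $h\colon S^1 \to S^1$ is fixed-point-free, and a fixed-point-free self-map of $S^1$ is homotopic to either the identity (degree $1$) or the antipodal-type reflection (degree $-1$); I would rule out degree $-1$ by a local orientation check at a single convenient point, e.g.\ a vertex of $C$ (a point of locally extremal curvature) or any point where the geometry is transparent, showing that a small counterclockwise arc at $p$ maps to a small counterclockwise arc at $h(p)$.

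The main obstacle I expect is justifying continuity and the orientation-preservation of $h$ rigorously from strict convexity alone, rather than, say, from $C^2$ or convexity of the evolute. Continuity of $h$ should follow from the implicit function theorem or a compactness argument: the second intersection point of the normal line at $p$ with $C$ varies continuously because strict convexity forces transversality (the normal line is not tangent at $h(p)$, since a line tangent to a strictly convex curve meets it only once). Orientation preservation is the genuinely substantive point; I would argue it by contradiction or by a connectedness argument — the ``local degree'' of $h$, suitably defined, is locally constant on the connected set $C$, so it suffices to check it is $+1$ at one point, which one does by inspecting the picture near, say, the endpoints of the major axis if $C$ is an ellipse, or in general near any point together with a short-arc estimate using differentiability of the normal direction. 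Once continuity and the single local check are in hand, the degree-one conclusion and hence surjectivity are immediate, and this lemma then feeds into the subsequent analysis of $\vrc{C}{r}$ as $r$ grows.
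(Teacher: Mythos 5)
Your key geometric input --- that $h$ has no fixed point, since the normal line at $p$ meets $C\setminus\{p\}$ --- is exactly the right one, and it is the same input the paper uses. But the topological step you build on it is wrong in two places. First, the claimed dichotomy ``a fixed-point-free self-map of $S^1$ is homotopic to either the identity (degree $1$) or an antipodal-type reflection (degree $-1$)'' is false: on $S^1$ the antipodal map is rotation by $\pi$, which has degree $+1$, whereas a reflection (degree $-1$) always has two fixed points; indeed \emph{every} degree $-1$ self-map of $S^1$ has a fixed point by the intermediate value theorem. So fixed-point-freeness alone already forces degree $+1$, and no case needs to be ruled out. The clean way to see this is the paper's argument: write $h(\alpha(e^{2\pi it}))=\alpha(e^{2\pi i(t+f(e^{2\pi it}))})$ with $f$ a continuous shift function; since $h(p)\neq p$ for all $p$, the shift $f$ takes values in an open interval and can be linearly homotoped to zero, giving $h\circ\alpha\simeq\alpha$ and hence equal winding numbers. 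Your abandoned ``chord plus retraction'' idea could also be completed (radially project the chord from an interior point of the convex region), but the route you actually commit to does not close.

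Second, the fallback you propose --- a ``local orientation check at a single convenient point'' propagated by connectedness --- would not work even if the dichotomy were true. The map $h$ is not in general a local homeomorphism, and for a strictly convex curve whose evolute escapes the convex hull (e.g.\ an eccentric ellipse near the ends of its major axis), $h$ genuinely \emph{reverses} orientation locally at some points while preserving it at others; this is precisely the content of Lemma~\ref{lem:3} and the equivalence (i) $\Leftrightarrow$ (ii) in the proof of Theorem~\ref{thm:R2}. The present lemma is stated for all strictly convex closed differentiable curves, \emph{without} the evolute hypothesis, so ``local degree is locally constant'' is not available, and a single local check cannot determine the global degree. The degree-one conclusion must come from a global homotopy argument as above, not from local orientation data.
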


\begin{proof}
Suppose that $C$ is parametrized by $\alpha\colon S^1\to C$.
Write $h(\alpha(e^{2\pi it}))=\alpha(e^{2\pi i(t+f(e^{2\pi it}))})$ where $f\colon S^1\to [0,2\pi)$ is a continuous function, which is possible since $h\circ \alpha$ and $\alpha$ never intersect.
Then we can do a ``straight-line" homotopy from $f$ down to the zero function to show that $h\circ\alpha$ and $\alpha$ are homotopy equivalent.
Indeed, consider the homotopy $H\colon S^1 \times I \to C$ defined by 
$H(e^{2\pi it},s)=\alpha(e^{2\pi i(t+sf(e^{2\pi it}))})$,
in which $H(\cdot,0)=\alpha$ and $H(\cdot,1)=h\circ\alpha$.
Since these maps are homotopy equivalent, it follows\footnote{ 
A more general statement, which follows from the same proof, is that if two maps $\alpha,\tilde{\alpha}\colon S^1\to S^1$ satisfy $\alpha(p)\neq\tilde{\alpha}(p)$ for all $p\in S^1$, then the winding numbers of $\alpha$ and $\tilde{\alpha}$ are equal.
}
that the winding number of $h\circ\alpha$ is equal to the winding number of $\alpha$, which is 1.
Hence $h$ is surjective.
\end{proof}

For $p \in C$, define the function $d_p \colon C \to \R$ by $d_p(q)=d(p,q)$, where $d(p,q)$ is the Euclidean distance between $p,q$.

\begin{lemma}\label{lem:crit-point}
Let $p\in C$.
Then a point $q\in C$ is a critical point of the function $d_p \colon C \to \R$ if and only if $q=p$ or $h(q)=p$.
\end{lemma}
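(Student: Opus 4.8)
The plan is to characterize critical points of $d_p$ using the standard fact from differential geometry that a point $q \neq p$ on a differentiable curve is a critical point of the distance-to-$p$ function exactly when the segment from $p$ to $q$ is orthogonal to the tangent line of $C$ at $q$. Since $C$ is strictly convex and closed, the tangent line at $q$ is well-defined, and orthogonality of the chord $\overline{pq}$ to $T(C(q))$ means precisely that $p$ lies on the normal line to $C$ at $q$. By Definition~\ref{def:h}, the normal line to $C$ at $q$ meets $C \setminus \{q\}$ in the single point $h(q)$, so $p$ lies on this normal line (and $p \neq q$) if and only if $p = h(q)$. The case $q = p$ must be handled separately, since then the chord degenerates; I would note that $q=p$ is always a critical point because $d_p$ attains its global minimum value $0$ there.

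The key steps, in order, are as follows. First, parametrize $C$ by a differentiable regular map $\alpha$ (using strict convexity to ensure $\alpha' \neq 0$ after reparametrizing by arc length if needed, or simply to invoke differentiability of $C$). Second, for $q \neq p$, compute $\frac{d}{dt}\, d_p(\alpha(t))$: writing $d_p(\alpha(t)) = \|\alpha(t) - p\|$, the derivative is $\frac{\langle \alpha(t)-p,\ \alpha'(t)\rangle}{\|\alpha(t)-p\|}$, which vanishes if and only if $\langle \alpha(t) - p,\ \alpha'(t)\rangle = 0$, i.e., the chord vector $\alpha(t)-p$ is perpendicular to the tangent vector at $q = \alpha(t)$. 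Third, translate this orthogonality condition into the statement ``$p$ is on the normal line to $C$ at $q$'' — this is immediate since the normal line at $q$ is by definition the line through $q$ perpendicular to $T(C(q))$. Fourth, invoke the defining property of $h$ from Definition~\ref{def:h}: the normal line at $q$ intersects $C \setminus \{q\}$ at the unique point $h(q)$, so $p \in (\text{normal line at } q) \cap (C \setminus\{q\})$ if and only if $p = h(q)$. Fifth, handle $q = p$ directly as a trivial critical point.

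One subtlety worth addressing carefully is differentiability of $d_p$ at the point $q = p$ itself: the Euclidean norm is not differentiable at the origin, so strictly speaking $d_p$ is not differentiable at $q=p$, but $q=p$ is still a critical point in the sense that it is a local (indeed global) minimum. I would state this explicitly so the biconditional reads correctly — ``critical point'' here should be understood to include local extrema where the function fails to be smooth, or alternatively one restricts the smooth analysis to $C \setminus \{p\}$ and appends $q=p$ by hand. A second point to verify is that the normal line at $q$ really does meet $C$ in exactly one point besides $q$: this is guaranteed by strict convexity (a line through an interior-touching chord or a line meeting a strictly convex curve transversely hits it in two points, as recalled in the Strictly convex curves subsection), and it is already baked into Definition~\ref{def:h}, so I may simply cite it.

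The main obstacle I anticipate is purely expository rather than mathematical: being precise about the $q = p$ endpoint case and about what ``critical point'' means for a function that is continuous on all of $C$ but only differentiable away from $p$. The differential-geometric computation itself is routine (it is the classical characterization of feet of normals / critical points of distance functions), so the real work is phrasing the statement and its proof cleanly so that both directions of the ``if and only if'' are unambiguous, and making sure the degenerate chord case does not create a gap. I would therefore devote a sentence at the start of the proof to fixing the convention, then carry out the one-line derivative computation for $q \neq p$, then conclude.
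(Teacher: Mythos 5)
Your proposal is correct and follows essentially the same route as the paper: compute $\frac{d}{dt}d_p(\alpha(t))$, observe that it vanishes away from $p$ exactly when the chord $p-\alpha(t)$ is orthogonal to $\alpha'(t)$, identify this with $p$ lying on the normal line at $q$ (hence $h(q)=p$ by Definition~\ref{def:h}), and treat $q=p$ separately as the global minimum. Your extra care about the non-differentiability of $d_p$ at $q=p$ is a reasonable refinement of a point the paper passes over quickly.
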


\begin{proof}
It is clear that $p$ is a global minimum of $d_p$, and therefore we may restrict attention to $q\neq p$.
Consider an arbitrary point $p=(p_1,p_2)\in C\subseteq\R^2$.
Let $\alpha\colon I
\to C$ be a parametrized curve in $C$.
Note that $d_p(\alpha(t))=\sqrt{(p_1-\alpha_1(t))^2 + (p_2-\alpha_2(t))^2}$, for all $t\in I$.
This gives us 
\begin{equation}\label{eq:distance-derivative}
\frac{d}{dt}d_p(\alpha(t))=\frac{-\alpha_1'(t)\bigl(p_1-\alpha_1(t)\bigr)-\alpha'_2(t)\bigl(p_2-\alpha_2(t)\bigr)}{\sqrt{(p_1-\alpha_1(t))^2 + (p_2-\alpha_2(t))^2}} = \frac{-\alpha'(t)\cdot(p-\alpha(t))}{\|p-\alpha(t)\|}.
\end{equation}
Therefore, the only critical points of $d_p$ away from $p$ are when $\frac{d}{dt}d_p(\alpha(t))=0$, i.e.\ $\alpha'(t)\cdot(p-\alpha(t))=0$.
These are precisely the points where the tangent line to $\alpha(t)$ is perpendicular to the line between $\alpha(t)$ and $p$.
Therefore, the critical points of $d_p$ occur at $p$ and all $q\in C$ such that $h(q)=p$.
\end{proof}

We are now ready to prove Theorem~\ref{thm:R2}, which states given a strictly convex closed differentiable planar curve $C$ equipped with the Euclidean metric, if the convex hull of $C$ contains the evolute of $C$, then the 1-skeleton of $\vrc{C}{r}$ is a cyclic graph or a cone for all $r\ge 0$.

\begin{proof}[Proof of Theorem~\ref{thm:R2}] Let $C$ be a strictly convex closed differentiable planar curve $C$ equipped with the Euclidean metric.
Furthermore, suppose the convex hull of $C$ contains the evolute of $C$.
We must show that the 1-skeleton of $\vrc{C}{r}$ is a cyclic graph or a cone for all $r\ge 0$.
We will show the implications (i) $\Rightarrow$ (ii)  $\Rightarrow$ (iii) $\Rightarrow$ (iv) $\Rightarrow$ (v) below, completing the proof.
\begin{enumerate}
\item[(i)] The evolute of $C$ is contained in the convex hull $\conv(C)$.
\item[(ii)] Function $h\colon C\to C$ is injective.
\item[(iii)] For all points $p\in C$, the distance function $d_p$ has exactly two critical points.
\item[(iv)] For all points $p\in C$ there is a unique global maximum of $d_p$ (call it $p^+$), and the distance function $d_p$ is monotonic along the two arcs in $C$ from $p$ to $p^+$.
\item[(v)] The 1-skeleton of $\vrc{C}{r}$ is a cyclic graph\footnote{When equipped with the obvious cyclic ordering on $C$.} or a cone for all $r\ge 0$.
\end{enumerate}

(i) $\Leftrightarrow$ (ii).
The intuition (before the proof) is as follows.
For an example, see  Figure~\ref{fig:ellipse-evolute}.
If $\alpha\colon I\to C$ is a curve moving in the counterclockwise direction, then note that $h(\alpha(t))$ will also be moving in the counterclockwise direction at $t\in I$ if and only if the center of curvature $x_\alpha(t)$ is in 
$\conv(C)$.
For the proof, we first note that a continuous map $h\colon C\to C$ of degree one is injective if and only if for any continuous function $\alpha\colon I\to C$, the orientations of $\alpha(t)$ and $h(\alpha(t))$ match for all times $t$.
The result then follows from Lemma~\ref{lem:3} in~\ref{app:evolutes-h}, which says that $\alpha(t)$ and $h(\alpha(t))$ have matching orientations for all $t$ if and only if the evolute of $C$ is contained in the convex hull $\conv(C)$.

(ii) $\Leftrightarrow$ (iii).
Note that $h$ is injective if and only if for each $p\in C$, there is a unique point $q\in C$ with $h(q)=p$, which by Lemma~\ref{lem:crit-point} is equivalent to $d_p$ having exactly two critical points (a global minimum $p$, and the point $q$ satisfying $h(q)=p$ as a global maximum).

(iii) $\Rightarrow$ (iv).
Note that for all $p\in C$, compactness implies that $d_p$ has at least two extrema (a global minima at $p$, and a global maxima).
If each $d_p$ has exactly two critical points, then each $d_p$ must have exactly two extrema (as every extrema is a critical point).
Therefore (iv) is satisfied.

Though not needed here, it is also true that (iv) $\Rightarrow$ (iii); see~\ref{app:iv-implies-iii},

(iv) $\Rightarrow$ (v).
If $r$ is such that $C\subseteq B(p,r)$ for some $p\in C$, then the 1-skeleton of $\vrc{C}{r}$ is a cone.
Otherwise, $r$ is small enough so that no ball $B(p,r)$ with $p\in C$ contains all of $C$.
Then the assumptions of (iv) imply that for all $p\in C$ and $0\le r'\le r$, the intersection $B(p,r')\cap C$ is a connected arc.
By Theorem~\ref{thm:Rd}, the 1-skeleton of $\vrc{C}{r}$ is a cyclic graph.

\end{proof}

We can now also prove Corollary~\ref{cor:near-linear}, which provides an $O(n\log n)$ algorithm for determining the homotopy type of $\vrc{X}{r}$, where $X$ is a sample of $n$ points from a strictly convex closed differentiable planar curve whose convex hull contains its evolute, and where $r\ge 0$.

\begin{proof}[Proof of Corollary~\ref{cor:near-linear}]
By Theorem~\ref{thm:R2} we know that the 1-skeleton of $\vrc{C}{r}$ is a cyclic graph, and it follows that the same is true of $\vrc{X}{r}$ for any $X\subseteq C$.
We use the $O(n\log n)$ algorithm in \ref{app:points-to-cyclic} to determine the cyclic ordering of the points in $X$ (along $C$) from their coordinates in $\R^2$; this algorithm does not require knowledge of $C$.
The result then follows since given a cyclic graph $G$ with $n$ vertices, there exists an $O(n\log n)$ algorithm for determining the homotopy type of the clique complex of $G$.
Indeed, this is contained in Theorem~5.7 and Corollary~5.9 of~\cite{AAFPP-J}, in which the algorithm is stated for a Vietoris--Rips complex of points on the circle, though it holds more generally for the clique complex of any finite cyclic graph.
The algorithm proceeds by removing \emph{dominated vertices}, without changing the homotopy type of the clique complex, until arising at a minimal regular configuration (in which each vertex has the same number of outgoing neighbors, such as the two cyclic graphs in the middle of Figure~\ref{fig:cyclic1}).
One can read off the homotopy type from this regular configuration.
\end{proof}

\begin{remark}
The algorithm in Corollary~\ref{cor:near-linear} is furthermore of linear running time $O(n)$ if the vertices are provided in cyclic order.
\end{remark}

\section{Conclusion}

We have derived conditions under which the Vietoris--Rips complex of a curve $C\subseteq \R^d$ is a cyclic graph.
For example, we have shown that if the convex hull of a strictly convex closed differentiable planar curve $C$ contains the evolute of $C$, then the 1-skeleton of $\vrc{C}{r}$ is a cyclic graph or a cone for all $r\ge 0$.
As a consequence, if $X$ is any finite set of $n$ points from $C$, then the homotopy type of $\vrc{X}{r}$ can be computed in time $O(n\log n)$.
Furthermore, we give an $O(n^2(k+\log n))$ time algorithm for computing the $k$-dimensional persistent homology of $\vrc{X}{r}$ (or more generally, the persistent homology of an increasing sequence of cyclic graphs).
This is significantly faster than the traditional persistent homology algorithm, which is cubic in the number of simplices of dimension at most $k+1$, and hence of running time $O(n^{3(k+2)})$ in the number of vertices $n$, though our algorithm only applies in specific settings.

Our work motivates the following questions.
\begin{itemize}
\item For $X\subseteq \R^2$ arbitrary, is there a cubic or near-quadratic algorithm in the number of vertices $n=|X|$ for determining the $k$-dimensional persistent homology of $\vrc{X}{r}$?
This would be more likely if the conjecture that the Vietoris--Rips complex $\vrc{X}{r}$ of any planar subset $X\subseteq\R^2$ is homotopy equivalent to a wedge sum of spheres were true (see Problem~7.3 of~\cite{adamaszek2017homotopy} and Question~5 in Section~2 of~\cite{gasarch2017open}).
\item As for one dimension higher, when $X\subseteq \R^3$ what is the computational complexity of computing the $k$-dimensional persistent homology of $\vrc{X}{r}$ in terms of the number of vertices $n=|X|$?
\item Is there a generalization of the evolute condition in Theorem~\ref{thm:R2} for curves in higher-dimensional Euclidean space $\R^d$?
\end{itemize}




\section{Acknowledgements}
We would like to thank Bei Wang for asking about the computational complexity of the persistent homology of points on the circle, and Jeff Erickson for helpful conversations related to his work with Irina Kostitsyna, Maarten L\"{o}ffler, Tillman Miltzow, and J\'{e}r\^{o}me Urhausen on chasing puppies.

\bibliographystyle{plain}
\bibliography{ThePersistentHomologyOfCyclicGraphs}

\appendix

\section{Any finite graph is a unit ball graph}\label{app:MDS}

An abstract simple graph $G$ (no loops or multiple edges) with $n$ vertices $1,2,\ldots,n$ is a \emph{unit ball graph in $\R^d$} if there exist points $v_1,v_2,\ldots,v_n\in\R^d$ such that $\|v_i-v_j\|\le 1$ for $i\neq j$ if and only if edge $ij$ is in $G$.
As cited in the introduction, we show that any finite graph with $n$ vertices can be realized as a unit ball graph in $\R^{n-1}$.
This result is surely well-known, though we have not yet found a reference.

In our argument we use the theory of multidimensional scaling~\cite{cox2000multidimensional}, a dimensionality reduction and visualization technique.
In particular, we use the notation from from~\cite{bibby1979multivariate}.
Let $D$ be the $n\times n$ symmetric distance matrix corresponding to the discrete metric space with $n$ points; $D$ has zeros along the diagonal and ones everywhere else.
Let $H=I-\frac{1}{n}\textbf{1}\textbf{1}^T$, where $\textbf{1}$ is the vertical vector of all ones.
The matrix $B=H(-\frac{1}{2}d_{ij}^2)H$ has zero as an eigenvalue (coming from the kernel of $H$), and $n-1$ nonzero eigenvalues all equal to $\frac{1}{2}$.
Since $B$ is positive semi-definite, it follows from Theorem~14.2.1 of~\cite{bibby1979multivariate} that $D$ is Euclidean, i.e., the discrete metric space on $n$ points can be isometrically embedded in $\R^{n-1}$ (as the vertices of a regular simplex).

Now, let $G$ be an arbitrary simple graph with $n$ vertices.
Let $\varepsilon>0$.
Let $D'$ be the $n \times n$ symmetric distance matrix with zeros along the diagonal, with $d_{ij}=1-\varepsilon$ if edge $ij$ is in $G$, and with $d_{ij}=1+\varepsilon$ if edge $ij$ is not in $G$.
The matrix $B'=H(-\frac{1}{2}(d'_{ij})^2)H$ has zero as an eigenvalue, and for $\varepsilon$ sufficiently small, $n-1$ positive eigenvalues arbitrarily close to $\frac{1}{2}$.
Hence $B'$ is positive semi-definite, and so Theorem~14.2.1 of~\cite{bibby1979multivariate} implies that the metric space determined by $D'$ admits an isometric embedding into $\R^{n-1}$.
These embedded points give the vertex locations showing that $G$ is a unit ball graph in $\R^{n-1}$, as desired.

\section{Intersections of balls with symmetric moment curves}\label{app:symmetric}

Let $f\colon S^1\to \R^4$ be the scaled symmetric moment curve defined by $f(t)=(\cos t, \sin t, \alpha \cos 3t, \alpha \sin 3t)$, where $\alpha\in\R$ is a constant.
Let $r_C$ be the diameter of $C=\im(f)$.
We show that if $\alpha<\frac{1}{\sqrt{3}}$, then for all $p\in C$ and $0\le r\le r_C$, the intersection $B(p,r_C)\cap C$ is a connected arc.
Hence Theorem~\ref{thm:Rd} applies to give that the 1-skeleton of $\vrc{C}{r}$ is a cyclic graph up until $r=r_C$, at which point $\vrc{C}{r}$ is contractible.

Let the circle $S^1=[0,2\pi)$ act on $\R^4$ by rotations: let $\theta\in S^1$ act via the rotation matrix
\[ R_\theta = \begin{pmatrix}
\cos\theta & -\sin\theta & 0 & 0 \\
\sin\theta & \cos\theta & 0 & 0 \\
0 & 0& \cos3\theta & -\sin3\theta \\
0 & 0& \sin3\theta & \cos3\theta
\end{pmatrix}.\]
Curve $C$ is the orbit of the single point $f(0)=(1,0,1,0)$ under this (isometric) action of the circle $S^1$ on $\R^4$ by rotations, and therefore $C$ is metrically homogeneous.
Therefore it suffices to prove our claim  when $p\in C$ is a single point.
For convenience, we let $p=f(0)\in C$.

\begin{figure}[htb]
\centering
\includegraphics[width=0.4\textwidth]{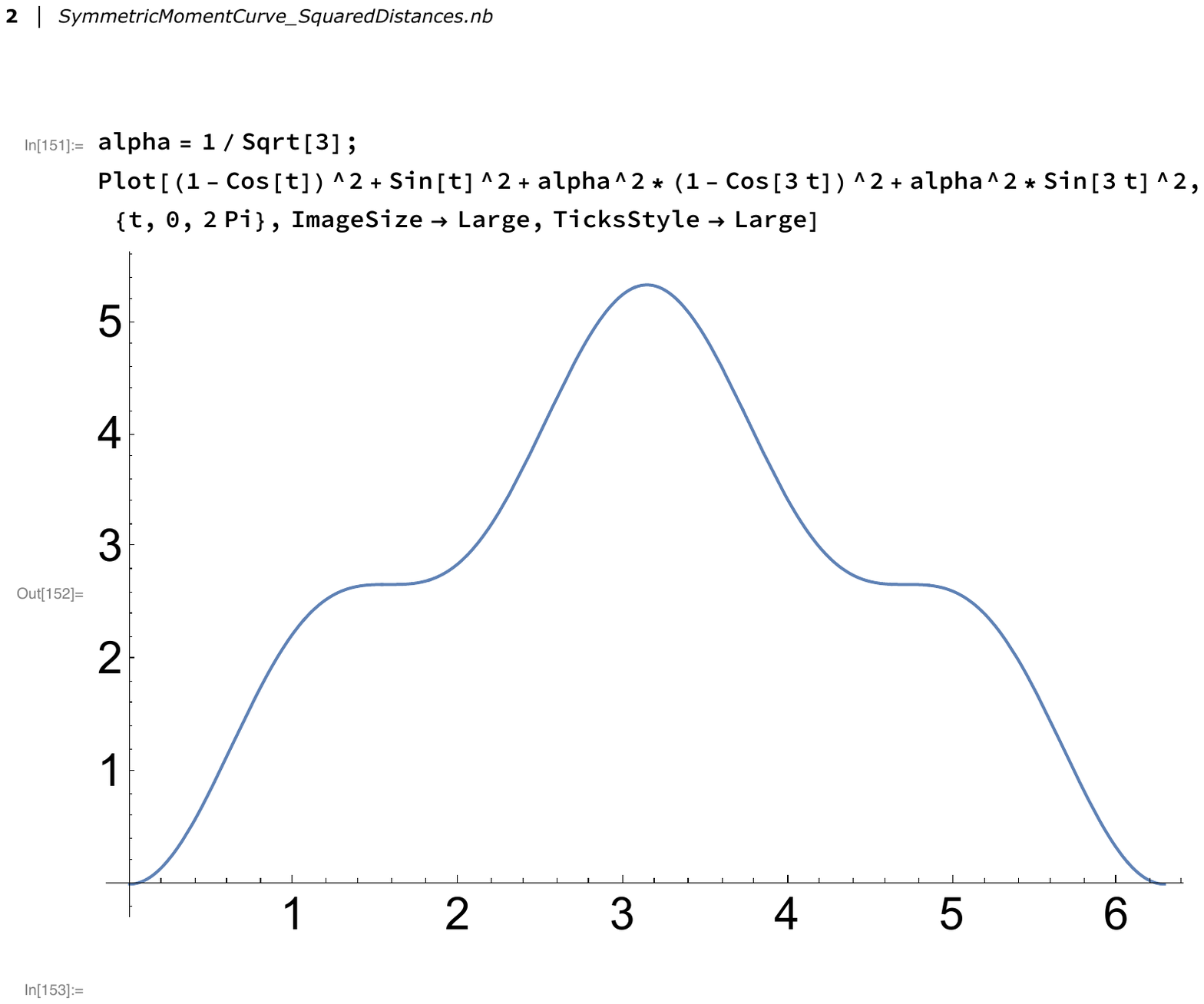}
\hspace{5mm}
\includegraphics[width=0.4\textwidth]{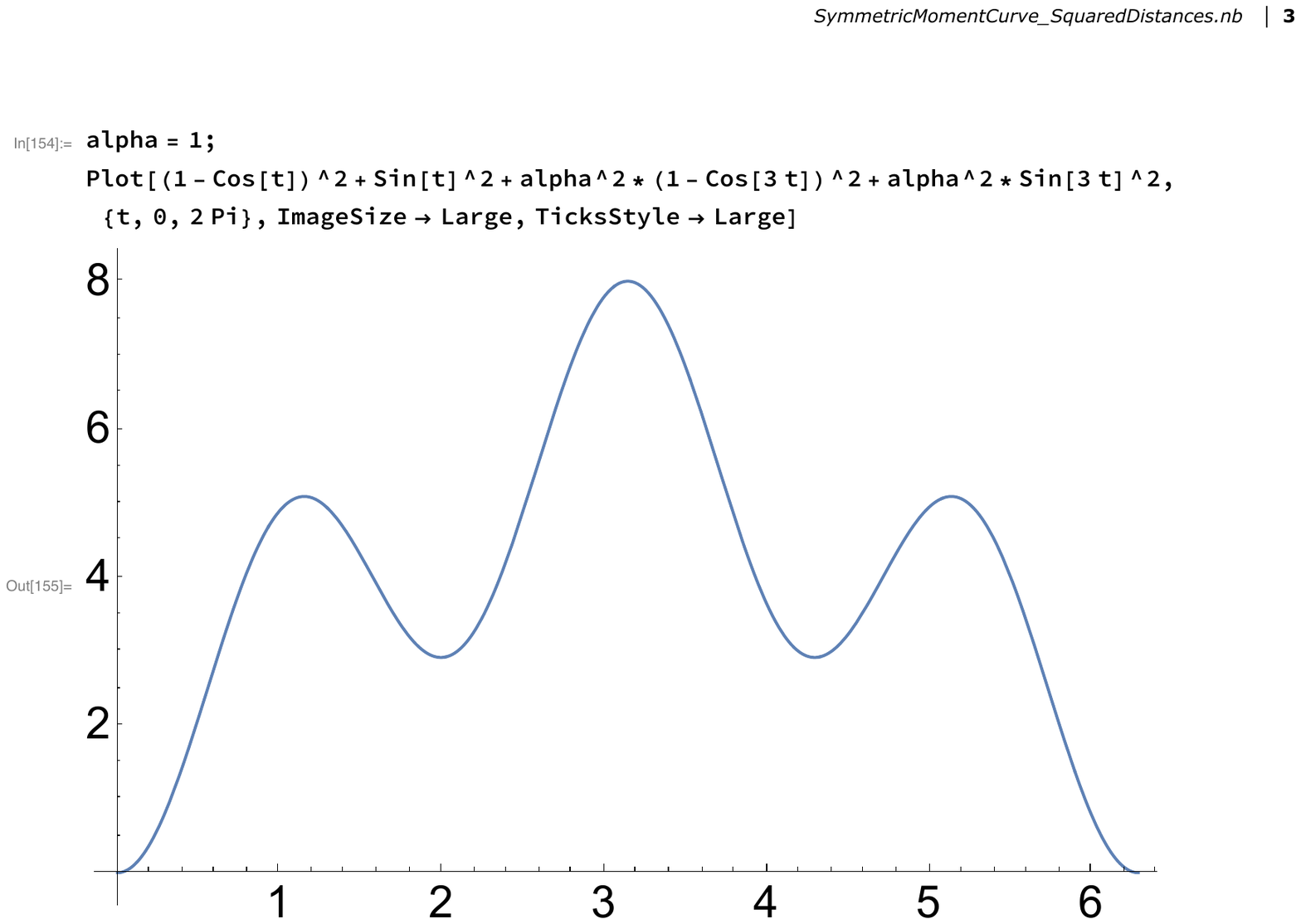}
\caption{
The squared Euclidean distance $d^2\bigl(f(0),f(t)\bigr)$ between $f(0)$ and $f(t)$, for (left) $\alpha=\frac{1}{\sqrt{3}}$ and (right) $\alpha=1$.
}
\label{fig:symmetric-moment-curve}
\end{figure}

The squared Euclidean distance between $f(0)$ and $f(t)$ on $C$ is given by
\begin{align*}
d^2\bigl(f(0),f(t)\bigr)&=(1-\cos t)^2+\sin^2(t)+\alpha^2(1-\cos 3t)^2+\alpha^2\sin^2 3t \\
&=2(1-\cos t+\alpha^2-\alpha^2\cos 3t).
\end{align*}
We take the derivative of this squared distance and set it equal to zero in order to obtain
\[ 0=\frac{d}{dt}d^2\bigl(f(0),f(t)\bigr)=2(\sin t+3\alpha^2\sin 3t). \]
Using the triple angle formula we have
\[ \sin t=-3\alpha^2\sin 3t=-3\alpha^2(3\sin t-4\sin^3 t)
\quad\Longrightarrow\quad
(1+9\alpha^2)\sin t=12\alpha^2\sin^3 t.\]
This provides solutions when $\sin t=0$, i.e.\ $t=0$ or $\pi$, and when $\sin^2 t=\frac{1+9\alpha^2}{12\alpha^2}$.
If $\alpha<\frac{1}{\sqrt{3}}$ then $\frac{1+9\alpha^2}{12\alpha^2}>1$, and so the only solutions to $0=\frac{d}{dt}d^2\bigl(f(0),f(t)\bigr)$ are obtained when $t=0$ or $\pi$, corresponding to the global minimum and maximum of $d^2\bigl(f(0),f(t)\bigr)$, respectively. 
Hence the intersection $B(f(0),r)\cap C$ is a connected arc for all $0\le r\le r_C$, where $r_C$ is the diameter of $C$.

Interestingly, the symmetric moment curve $(\cos t, \sin t, \cos 3t, \sin 3t)$ is closely related to Barvinok--Novik orbitopes~\cite{barvinok2008centrally}, and also to the Vietoris--Rips thickening of the circle and Borsuk--Ulam theorems for maps from the circle into higher-dimensional Euclidean spaces~\cite{adams2019metric}.

\section{From Euclidean points to a cyclic graph}\label{app:points-to-cyclic}

Given a sample $X$ of $n$ points from a strictly convex differentiable planar curve $C$ whose convex hull contains its evolute, it is easy to determine the cyclic graph structure on the 1-skeleton of $\vrc{X}{r}$ even \emph{without} knowledge of $C$.
We can place the points in $X$ in cyclically sorted order by running a convex hull algorithm, which takes $O(n \log n)$.

\paragraph*{Determining the direction of each edge}
Given the $n$ points $X$ on $C$, suppose we are adding the next shortest undirected edge $\{x,y\}$ and need to determine whether this edge is oriented $x\to y$ or $y\to x$.
First, we add another array to the algorithm in Section~\ref{sec:near-quadratic}, to keep track of the degree of each vertex.
Indeed, if any vertex $v$ has full degree $n - 1$, then the Vietoris-Rips complex is contractible (it is a cone with $v$ as its apex), and so we're trivially done.
So, consider the case where no vertex has degree $n - 1$.
Going counterclockwise from $x$, all the other vertices in $X$ fall into three categories: first the vertices $v$ with an edge $x\to v$, then the vertices that are not connected to $x$, then the vertices $v$ with an edge $v\to x$, before finally getting back to $x$ (see Figure~\ref{fig:sortedPoints-edgeOrientations}(right)).
To determine the direction on the new edge $\{x,y\}$, note that since the graph is cyclic, $y$ will be adjacent (in the cyclic order) to either a vertex $v$ from the first category ($x\to v$) or to a vertex $v$ from the third category ($v\to x$).
If $y$ is adjacent exclusively to a vertex $v$ with $x\to v$, then the direction on our new edge is $x\to y$.
If $y$ is adjacent exclusively to a vertex $v$ with $v\to x$, then the direction on our new edge is $y\to x$.
Finally, if $y$ is adjacent to a vertex of each type, then after adding the new edge $\{x,y\}$ necessarily $x$ has degree $n - 1$, meaning that the Vietoris--Rips complex is contractible.

\begin{figure}[htb]
\centering
\includegraphics[width=0.45\textwidth]{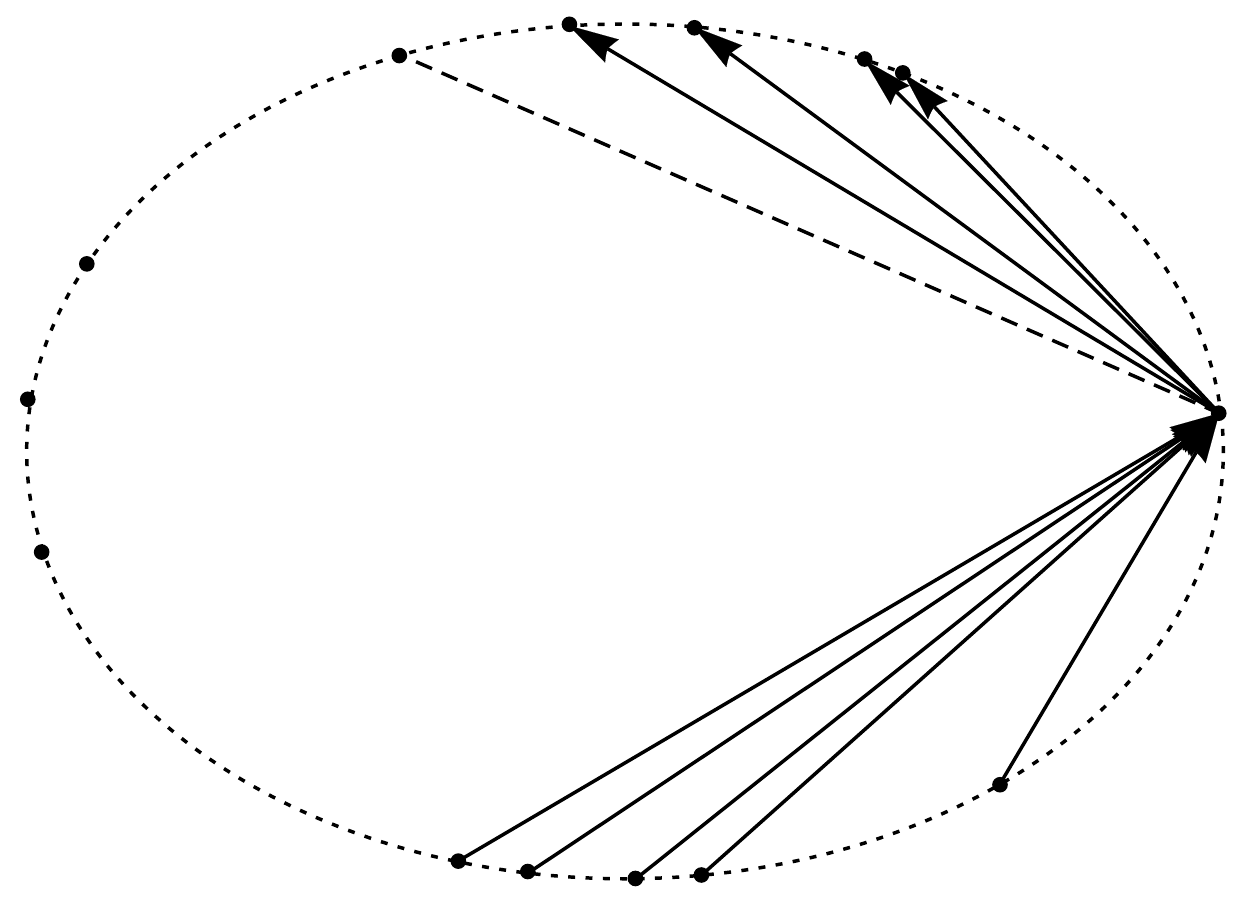}
\caption{
We identify that the new edge, the dashed edge, must be oriented from bottom-right to top-left.
}
\label{fig:sortedPoints-edgeOrientations}
\end{figure}

\section{Evolutes and injectivity}\label{app:evolutes-h}

The goal of this section is to prove Lemma~\ref{lem:3}.
Lemma~\ref{lem:3} is used in the proof of Theorem~\ref{thm:R2} in order to prove (i) $\Leftrightarrow$ (ii), namely that the evolute of $C$ is contained in the convex hull of $C$ if and only if the function $h\colon C\to C$ is injective.

We begin with some background on orientations.
A differentiable closed simple curve $\alpha\colon I\to C\subseteq \R^2$ with $\alpha$ injective is said to be positively (resp.\ negatively) oriented if it is moving in the counterclockwise (resp.\ clockwise) direction around $C$.
It follows from Lemma~\ref{lem:1} that $(\alpha'(t),n(\alpha(t)))$ is a positive (resp.\ negative) basis for $\R^2$, for all $t \in I$.
The curve $\beta\colon I\to C$ is said to have a \emph{matching orientation} to $\alpha$ when the basis $(\beta'(t),n(\beta(t)))$ has the same sign as $(\alpha'(t),n(\alpha(t)))$~\cite{do2016differential}.

Recall from Section~\ref{sec:prelims} that for $\alpha\colon I\to \R^2$ a differentiable curve in the plane, we define the inner normal vector $n(t)$, the curvature $\kappa(t)$, and the center of curvature $x_\alpha(t)=\alpha(t)+\frac{1}{\kappa(t)}n(t)$.

\begin{lemma}\label{lem:1}
Let $C$ be a convex curve in $\R^2$, and let $\alpha\colon I\to C$ be a differentiable curve moving in the counterclockwise (resp.\ clockwise) direction around $C$.
Then $(\alpha'(t),n(\alpha(t)))$ is a positive (resp.\ negative) basis for $\R^2$.
\end{lemma}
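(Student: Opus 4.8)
The plan is to prove the statement pointwise: fix $t_0\in I$ and show $(\alpha'(t_0),n(\alpha(t_0)))$ is positively (resp.\ negatively) oriented. Write $C=\partial Y$ with $Y\subseteq\R^2$ convex, let $J\colon\R^2\to\R^2$ denote counterclockwise rotation by $\pi/2$, and abbreviate $T=T(\alpha(t_0))$, $n=n(\alpha(t_0))$. Since $\|T\|$ is constant we have $T'(t_0)\perp T$, so $n\perp T$; also $J\alpha'(t_0)\perp\alpha'(t_0)\parallel T$. As the orthogonal complement of $T$ in $\R^2$ is a line, $n=\pm\|\alpha'(t_0)\|^{-1}\,J\alpha'(t_0)$, and consequently $\det\!\big(\alpha'(t_0),n\big)=\pm\|\alpha'(t_0)\|$ (using $\det(u,Ju)=\|u\|^2$). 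Thus the whole lemma reduces to fixing the sign, i.e.\ to deciding which of the two sides of the tangent line $L$ to $C$ at $\alpha(t_0)$ the normal $n$ points toward.

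First I would pin down the direction of $n$. Because $Y$ is convex and $C=\partial Y$ is differentiable at $\alpha(t_0)$, the tangent line $L$ is a supporting line of $Y$: there is a closed half-plane $H$ with $\partial H=L$, $Y\subseteq H$; let $v$ be the unit inward normal of $H$. I claim $n=v$. Differentiating $\alpha'=\|\alpha'\|\,T$ and using $\langle T,v\rangle=0$ gives $\langle\alpha''(t_0),v\rangle=\|\alpha'(t_0)\|\,\langle T'(t_0),v\rangle$. Now expand $g(t):=\langle\alpha(t)-\alpha(t_0),v\rangle$ at $t_0$: one has $g(t_0)=0$ and $g'(t_0)=\langle\alpha'(t_0),v\rangle=0$ (since $\alpha'(t_0)\parallel T\perp v$), so $g(t)=\tfrac12\langle\alpha''(t_0),v\rangle\,(t-t_0)^2+o\big((t-t_0)^2\big)$; but $\alpha(t)\in C\subseteq H$ while $\alpha(t_0)\in\partial H$, so $g(t)\ge0$ for $t$ near $t_0$, forcing $\langle\alpha''(t_0),v\rangle\ge0$ and hence $\langle T'(t_0),v\rangle\ge0$. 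Since $n$ is defined, $T'(t_0)\ne0$; being orthogonal to $T$ it is a nonzero multiple of $v$, and the inequality makes it a \emph{positive} multiple, so $n=T'(t_0)/\|T'(t_0)\|=v$. In particular $n$ points from $\alpha(t_0)$ into the half-plane $H$ containing $Y$.

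Finally I would invoke the meaning of orientation. To say $\alpha$ moves counterclockwise around $C$ is to say that $Y$ lies locally to the left of the direction of travel $\alpha'(t_0)$; equivalently, $J\alpha'(t_0)$ points into $H$, that is $J\alpha'(t_0)=\|\alpha'(t_0)\|\,v$. Combined with $n=v$, the vector $n$ is a positive multiple of $J\alpha'(t_0)$, so $\det\!\big(\alpha'(t_0),n\big)=\|\alpha'(t_0)\|^{-1}\det\!\big(\alpha'(t_0),J\alpha'(t_0)\big)=\|\alpha'(t_0)\|>0$, i.e.\ $(\alpha'(t_0),n(\alpha(t_0)))$ is a positive basis; the clockwise case is identical with every sign reversed. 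The main obstacle is the middle paragraph --- converting the purely analytic definition $n=T'/\|T'\|$ into the geometric statement that $n$ points into $Y$ --- and the supporting-line property of convex sets together with the second-order Taylor estimate is exactly the tool for it. A minor point to flag is that the assertion is only meaningful at points $t_0$ where $T'(t_0)\ne0$ (where $n$ is defined), since a convex, even strictly convex, differentiable curve may have isolated points of vanishing curvature.
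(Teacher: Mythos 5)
Your proof is correct, but it follows a genuinely different route from the paper's. The paper disposes of the lemma in one sentence by citing a textbook fact (Definition~2.1.2 and Theorem~2.4.2 of its differential-geometry reference) that the signed curvature of a convex curve never changes sign, and concluding that the orientation of the basis $(\alpha'(t),n(\alpha(t)))$ is therefore constant along the curve; as printed the paper even writes the signed curvature as $n(\alpha(t))\cdot\alpha'(t)$, which is identically zero since the two vectors are orthogonal, so the intended quantity must be something like $\det(\alpha'(t),n(\alpha(t)))$. You instead argue pointwise and from first principles: you reduce everything to a sign determination via $n=\pm J\alpha'/\|\alpha'\|$, use the supporting-line property of the convex body together with a second-order Taylor expansion of $\langle\alpha(t)-\alpha(t_0),v\rangle$ to show that $n$ is the inward normal, and then unwind what ``counterclockwise'' means. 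Your version buys completeness that the paper's citation does not obviously supply --- constancy of the sign alone does not tell you that counterclockwise corresponds to the \emph{positive} orientation, whereas you actually pin the sign down --- and it surfaces two points the paper glosses over: the need for $\alpha''(t_0)$ to exist in the Taylor step (already implicit in the paper's definition of $n$ via $T'$), and the fact that $n$ is undefined at points where $T'$ vanishes. The only soft spot is that you take ``counterclockwise'' to mean ``the region lies to the left of the direction of travel''; for a convex curve this is the standard convention and no less precise than the paper, but a one-line reconciliation with the winding-number description of orientation used elsewhere in the paper (e.g.\ in the degree argument for $h$) would make the argument airtight.
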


\begin{proof}
Definition~2.1.2 and Theorem~2.4.2 of~\cite{petersen_2016} show that the \emph{signed curvature} $n(\alpha(t))\cdot \alpha'(t)$ never changes sign for $C$ convex, which implies that the orientation on the basis $(\alpha'(t),n(\alpha(t)))$ never changes signs.
\end{proof}

\begin{lemma}\label{lem:s-diff}
Let $C$ be a differentiable convex curve in $\R^2$, and let $\alpha\colon I\to C$ be differentiable.
The line through $\alpha(t)$ and $x_\alpha(t)$ intersects $C$ at a unique other point, which we denote by $\beta(t)=h(\alpha(t))$.
If we define $s\colon I\to \R$ to satisfy $\beta(t)=\alpha(t)+s(t)n(\alpha(t))$, then the function $s$ is differentiable.
\end{lemma}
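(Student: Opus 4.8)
The plan is to prove differentiability of $s$ at each $t_0\in I$ separately, by realizing the point $\beta(t)=h(\alpha(t))$, via a fixed regular local parametrization of $C$, as the solution of a scalar equation to which the implicit function theorem applies. Write $p_0:=\alpha(t_0)$ and $q_0:=\beta(t_0)=h(p_0)$; by Definition~\ref{def:h} we have $q_0\neq p_0$ and $q_0$ lies on the normal line $\ell_0$ to $C$ at $p_0$. First I would fix a regular parametrization $\gamma$ of a neighborhood of $q_0$ in $C$ with $\gamma(0)=q_0$ and $\gamma'$ nowhere zero. Since $\beta=h\circ\alpha$ is continuous, for $t$ near $t_0$ we may write $\beta(t)=\gamma(\psi(t))$ with $\psi$ continuous and $\psi(t_0)=0$; the goal then reduces to showing $\psi$ is differentiable at $t_0$, because then $\beta=\gamma\circ\psi$ is differentiable at $t_0$, and dotting the defining relation $\beta(t)=\alpha(t)+s(t)n(\alpha(t))$ with the unit vector $n(\alpha(t))$ gives $s(t)=\langle\beta(t)-\alpha(t),\,n(\alpha(t))\rangle$, a composition of maps differentiable at $t_0$.

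Next I would set up the equation. A point $\gamma(v)$ lies on the normal line to $C$ at $\alpha(t)$ exactly when $\gamma(v)-\alpha(t)$ is orthogonal to the unit tangent $T(\alpha(t))$, so with $F(t,v)=\langle\gamma(v)-\alpha(t),\,T(\alpha(t))\rangle$ we have $F(t,\psi(t))=0$ near $t_0$. This $F$ is continuous, has continuous partial derivative $\partial_v F(t,v)=\langle\gamma'(v),\,T(\alpha(t))\rangle$, and is differentiable in $t$ (here one uses that $C$, and hence $\alpha$ and $T\circ\alpha$, are smooth). The implicit function theorem then yields differentiability of $\psi$ at $t_0$ once the nondegeneracy condition $\partial_v F(t_0,0)\neq 0$ is verified.

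The main obstacle — and the only place the convexity hypothesis is needed — is exactly this nondegeneracy. Since $\gamma'(0)$ is a nonzero scalar multiple of the unit tangent $T(q_0)$, the condition reads $\langle T(q_0),T(p_0)\rangle\neq 0$, i.e.\ the tangent lines to $C$ at $p_0$ and $q_0$ are not perpendicular. I would argue by contradiction: if they were perpendicular, then $\ell_0$ (which passes through $p_0$ and $q_0=h(p_0)$ and is perpendicular to $T(p_0)$) would be parallel to $T(q_0)$, and hence would be the tangent line to $C$ at $q_0$. For $C$ strictly convex this is immediately absurd, since a tangent line to a strictly convex curve meets it only at the point of tangency, whereas $\ell_0$ also contains $p_0\neq q_0$. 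For $C$ merely convex and differentiable, the tangent line at $q_0$ is a supporting line of $\conv(C)$, so $\ell_0\cap C$ is a face containing $p_0$ and $q_0$, forcing $\ell_0$ to be the tangent line to $C$ at $p_0$ as well; but $\ell_0$ is by definition the normal line at $p_0$, so $\ell_0$ would be perpendicular to itself, a contradiction.

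Finally, with nondegeneracy established, the implicit function theorem gives that $\psi$ is differentiable at $t_0$ (with $\psi'(t_0)=-\partial_t F(t_0,0)/\partial_v F(t_0,0)$), hence $\beta$ and then $s$ are differentiable at $t_0$; since $t_0\in I$ was arbitrary, $s$ is differentiable on $I$. The only remaining routine points are the existence of a suitable smooth regular local parametrization $\gamma$ (immediate from $C$ being a differentiable curve) and the claimed regularity of $F$ and $\partial_v F$, both of which I would dispatch quickly.
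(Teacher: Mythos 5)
Your proof is correct, and it follows the same overall strategy as the paper (implicit function theorem, with convexity supplying the transversality/nondegeneracy condition), but the concrete setup is different in a way worth noting. The paper applies the implicit function theorem to $f=d^\pm\circ g$, where $d^\pm$ is the signed distance to $C$ and $g(t,s)=\alpha(t)+s\,n(\alpha(t))$; this solves directly for $s(t)$ but requires importing the differentiability of the signed distance function in a neighborhood of a smooth curve (a nontrivial regularity fact cited from the literature). You instead localize on the target side: parametrize $C$ near $q_0=h(\alpha(t_0))$ by a regular $\gamma$, characterize membership in the normal line by the scalar equation $F(t,v)=\langle\gamma(v)-\alpha(t),T(\alpha(t))\rangle=0$, and recover $s(t)=\langle\beta(t)-\alpha(t),n(\alpha(t))\rangle$ afterwards. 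This is more elementary and self-contained, avoiding the distance-function machinery entirely. The nondegeneracy conditions are equivalent in both arguments ($\langle n(q_0),n(p_0)\rangle\neq0$ in the paper versus $\langle T(q_0),T(p_0)\rangle\neq0$ for you, which coincide in $\R^2$), and your contradiction argument --- that a perpendicularity would make the normal line at $p_0$ a supporting line of $C$ through two distinct points, hence tangent at $p_0$ as well --- is a more carefully spelled-out version of the paper's appeal to convexity, and correctly covers both the strictly convex and merely convex cases. Both proofs, yours included, implicitly use more regularity than the stated hypothesis of mere differentiability (you need $T\circ\alpha$ differentiable for $\partial_t F$, the paper needs it for $g$ and for the regularity of $d^\pm$), so that shared caveat is not a defect of your argument relative to the paper's.
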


\begin{proof}
We employ the implicit function theorem.
Define $d^\pm\colon\R^2\to\R$ to be the signed distance to the curve $C$, namely
\[d^\pm(y)=
\begin{cases}
d(x,C)&\text{if }x\in\conv(C)\\
-d(x,C)&\text{otherwise.}
\end{cases}\]
Since $C$ is a smooth and complete manifold, it follows from Section~3 of~\cite{mantegazza2010notes} that $d^\pm$ is differentiable on an open neighborhood of $C$, with derivative
\[\nabla d^\pm(x)=
\begin{cases}
\frac{x-y}{\|x-y\|}&\text{if }x\in\interior(\conv(C))\\
\frac{y-x}{\|x-y\|}&\text{if }x\notin\conv(C)\\
n(x)&\text{if }x\in C,
\end{cases}\]
where $y\in C$ is the unique closest point on $C$ to $x$.
Related references include~\cite{castelpietra2010regularity,federer1959curvature,fu1985tubular,mantegazza2003hamilton,villani2011regularity}.
Furthermore, define $g\colon\R^2\to \R^2$ via $g(t,s)=\alpha(t)+sn(\alpha(t))$, and define $f\colon \R^2\rightarrow\R$ by $f=d^\pm\circ g$.
Note that $g$ is differentiable since $C$ is, and hence $f$ is differentiable as the composition of $d^\pm$ with $g$.

Pick $t_0,s_0$ such that $f((t_0,s_0))=0$; hence $g(s_0,t_0)\in C$.
In order to apply the implicit function theorem we need to show that the Jacobian of $f$ with respect to $s$ is invertible at $(t_0,s_0)$; this is equivalent to showing that $\frac{\partial f}{\partial s}(t_0,s_0)\neq 0$.
Using the chain rule for $f=d^\pm\circ g$, we compute
\[ \tfrac{\partial f}{\partial s}(t_0,s_0) = \nabla d^\pm(\alpha(g(t_0,s_0)))^T \cdot \tfrac{\partial g}{\partial s}(s_0,t_0) = n(g(t_0,s_0))^T \cdot n(\alpha(t_0)).\]
Suppose for a contradiction that the vectors $\alpha'(g(t_0,s_0))$ and $n(\alpha(t_0))$ were parallel.
Then the normal to $C$ at $\alpha(t_0)$ and the tangent to $C$ at $g(t_0,s_0)$ would be the same line (they have the same direction vectors, and both pass through $g(t_0,s_0)$).
This would mean that $\alpha(t_0)$ lives on the tangent line to $C$ at $g(t_0,s_0)$ and that $\alpha'(t_0)$ is perpendicular to $\alpha'(g(t_0,s_0))$, contradicting convexity.
Hence it must be that $\alpha'(g(t_0,s_0))$ and $n(\alpha(t_0))$ are not parallel, and therefore $\tfrac{\partial f}{\partial s}(t_0,s_0)\neq 0$.

It then follows from the implicit function theorem that there exists an open set $U$ about $t\in \R$, and a differentiable function $s:U\rightarrow \R$, such that $s(t_0)=s_0$ and $f(t,s(t))=0$ for all $t\in U$.
This gives the differentiability of $s\colon I\to\R$, as desired.
\end{proof}

\begin{lemma}\label{lem:2}
Let $C$ be a differentiable convex curve in $\R^2$, and let $\alpha\colon I\to C$ be a differentiable curve moving in the counterclockwise direction about $C$.
Let $L(t)$ be the line through $\alpha(t)$ and $x_\alpha(t)$, namely $L(t)=\{\alpha(t)+sn(t)~|~s\in\R\}$.
Suppose that $\beta(t)=\alpha(t)+s(t)n(\alpha(t))$ is an arbitrary\footnote{In Lemma~\ref{lem:s-diff} we assume that $\beta$ has image in $C$; that is not necessary here.} differentiable curve with $\beta(t)\in L(t)\setminus\{x_\alpha(t)\}$ for all $t\in I$.
Then $(\beta'(t),n(\alpha(t)))$ is a positive basis for $\R^2$ if and only $\beta(t)$ is in the same connected component of $L(t)\setminus\{x_\alpha(t)\}$ as $\alpha(t)$.
\end{lemma}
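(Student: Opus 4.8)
The plan is to differentiate the formula $\beta(t)=\alpha(t)+s(t)n(\alpha(t))$ and read off the sign of $\det\bigl(\beta'(t),n(\alpha(t))\bigr)$ directly, then translate the resulting inequality into the stated geometric condition.

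First I would record the planar Frenet relation in the form needed here. Since $n(\alpha(t))$ is a unit vector and $\alpha$ is differentiable, $s(t)=\bigl(\beta(t)-\alpha(t)\bigr)\cdot n(\alpha(t))$ is differentiable. Writing $v(t)=\|\alpha'(t)\|$ so that $\alpha'(t)=v(t)T(\alpha(t))$, and using that $\alpha$ moves counterclockwise about the convex curve $C$ (so that, as in the proof of Lemma~\ref{lem:1}, the inner unit normal agrees with the $90^\circ$ rotation of $T$ and the signed curvature equals the positive curvature $\kappa(t)$), the chain rule applied to $\frac{dn}{ds}=-\kappa T$ (arc length $s$) gives $\frac{d}{dt}n(\alpha(t))=-\kappa(t)v(t)T(\alpha(t))=-\kappa(t)\alpha'(t)$. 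Differentiating $\beta$ then yields
\[\beta'(t)=\alpha'(t)+s'(t)n(\alpha(t))+s(t)\tfrac{d}{dt}n(\alpha(t))=\bigl(1-s(t)\kappa(t)\bigr)\alpha'(t)+s'(t)n(\alpha(t)).\]

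Next I would compute the orientation. Because the second basis vector is $n(\alpha(t))$, the $s'(t)n(\alpha(t))$ term drops out of the determinant, so
\[\det\bigl(\beta'(t),n(\alpha(t))\bigr)=\bigl(1-s(t)\kappa(t)\bigr)\det\bigl(\alpha'(t),n(\alpha(t))\bigr).\]
By Lemma~\ref{lem:1} the basis $(\alpha'(t),n(\alpha(t)))$ is positive, hence $\det(\alpha'(t),n(\alpha(t)))>0$; therefore $(\beta'(t),n(\alpha(t)))$ is a positive basis if and only if $1-s(t)\kappa(t)>0$, i.e.\ $s(t)<\tfrac{1}{\kappa(t)}=r(t)$, the radius of curvature. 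To finish, I would parametrize $L(t)$ by the coordinate $s$ via $s\mapsto\alpha(t)+s\,n(\alpha(t))$: then $\alpha(t)$ sits at $s=0$ and $x_\alpha(t)=\alpha(t)+\tfrac{1}{\kappa(t)}n(\alpha(t))$ at $s=r(t)>0$, so $L(t)\setminus\{x_\alpha(t)\}$ has components $\{s<r(t)\}$ and $\{s>r(t)\}$, and $\beta(t)$ lies in the same component as $\alpha(t)$ precisely when $s(t)<r(t)$. Combined with the determinant computation this is exactly the claimed equivalence.

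The main obstacle I anticipate is entirely bookkeeping: pinning down the sign conventions (inner normal versus the $90^\circ$-rotated normal, signed versus unsigned curvature) so that $\frac{d}{dt}n(\alpha(t))=-\kappa(t)\alpha'(t)$ with $\kappa(t)>0$ is genuinely justified for a counterclockwise-oriented convex curve, and making sure the curvature is finite and positive so that $r(t)$ is a well-defined positive number; this is where I would lean on Lemma~\ref{lem:1} and its cited sources rather than redo the computation from scratch.
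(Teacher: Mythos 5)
Your proof is correct and follows essentially the same route as the paper: both differentiate $\beta(t)=\alpha(t)+s(t)n(\alpha(t))$, invoke the planar Frenet relation $\tfrac{d}{dt}n(\alpha(t))=-\kappa(t)\alpha'(t)$, and reduce the orientation question to the sign of $1-s(t)\kappa(t)$. The only (cosmetic) difference is that you read off the sign from the determinant $\det\bigl(\beta'(t),n(\alpha(t))\bigr)$ directly, whereas the paper checks the equivalent condition via the sign of the dot product $\alpha'(t)\cdot\beta'(t)$.
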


\begin{proof}
We claim that when $s(t)=\frac{1}{\kappa(t)}$, we have $\alpha'(t)\cdot \beta'(t)=0$.
From the Frenet-Serret formulas~\cite{do2016differential},
we have
\begin{equation}\label{eq:n'}
n'(\alpha(t))=\|\alpha'(t)\|\left(-\kappa(t)\left(\frac{\alpha'(t)}{\|\alpha'(t)\|}\right)+\tau(t)B(t)\right)=-\kappa(t)\alpha'(t),
\end{equation}
where the last equality follows since the torsion term is $\tau(t)=0$ for all curves in $\R^2$.
Note that when $s(t)=\frac{1}{\kappa(t)}$, we have 
\begin{align*}
\alpha'(t)\cdot \beta'(t)&=\alpha'(t)\cdot\Bigl(\alpha'(t)+s'(t)n(\alpha(t))+s(t)n'(\alpha(t))\Bigr)\\
&=\|\alpha'(t)\|^2+\frac{1}{\kappa(t)}\alpha'(t)\cdot n'(\alpha(t))\quad\mbox{since }\alpha'(t)\mbox{ and }n(\alpha(t))\mbox{ are orthogonal}\\
&=\|\alpha'(t)\|^2-\frac{1}{\kappa(t)}\|\alpha'(t)\|\ \|n'(\alpha(t))\| \quad\mbox{by~\eqref{eq:n'}}\\
&=\|\alpha'(t)\|\left(\|\alpha'(t)\|- \frac{1}{\kappa(t)}\|n'(\alpha(t))\|\right) \\
&=0\quad\mbox{by~\eqref{eq:n'}.}
\end{align*}

Now, let's consider the case where $\alpha(t)$ and $\beta(t)$ are on the same connected component of $L(t)\setminus\{x_\alpha(t)\}$.
Since $\alpha'(t)$ is perpendicular to $n(\alpha(t))$, it suffices to show that the dot product $ \alpha'(t)\cdot \beta'(t)$ is positive.
Since $\alpha(t)$ and $\beta(t)$ are on the same connected component, we know that $s(t)<\frac{1}{\kappa(t)}$.
Since 
\[\alpha'(t)\cdot \beta'(t)=\|\alpha'(t)\|^2-\frac{1}{\kappa(t)}\|\alpha'(t)\|\ \|n'(\alpha(t))\|=0, \]
it must be the case that for for $s(t)<\frac{1}{\kappa(t)}$ we have
\[ \alpha'(t)\cdot \beta'(t)=\|\alpha'(t)\|^2-s(t) \|\alpha'(t)\|\ \|n'(\alpha(t))\|>0.\]

Finally, consider the case where $\alpha(t)$ and $\beta(t)$ are not on the same connected component of $L(t)\setminus\{x_\alpha(t)\}$.
So $s(t)>\frac{1}{\kappa(t)}$.
This gives us that
\[\alpha'(t)\cdot \beta'(t)=\|\alpha'(t)\|^2-s(t)\|\alpha'(t)\|\ \|n'(\alpha(t))\|< 0.\]
\end{proof}

\begin{lemma}\label{lem:3}
Let $C$ be a convex curve in $\R^2$, and let $\alpha\colon I\to C$ be a differentiable curve moving in the counterclockwise direction about $C$.
The line through $\alpha(t)$ and $x_\alpha(t)$ intersects $C$ at a unique point of $C\setminus\{\alpha(t)\}$, which we denote by $\beta(t)=h(\alpha(t))$.
Then $\alpha$ and $\beta$ have matching orientations at time $t$ if and only if $x_\alpha(t)\in\conv(C)$.
\end{lemma}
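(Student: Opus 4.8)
The plan is to express the statement entirely in terms of the sign of a single $2\times 2$ determinant and then to read that sign off from the position of the center of curvature $x_\alpha(t)$ relative to the chord of $C$ through $\alpha(t)$ and $\beta(t)$, combining Lemmas~\ref{lem:1}, \ref{lem:s-diff}, and~\ref{lem:2}. First I would use Lemma~\ref{lem:s-diff} to write $\beta(t)=\alpha(t)+s(t)n(\alpha(t))$ with $s\colon I\to\R$ differentiable; since $\beta(t)$ is the intersection of the inner-normal ray at $\alpha(t)$ with $C$, we have $s(t)>0$. Because $\alpha$ runs counterclockwise, Lemma~\ref{lem:1} says $(\alpha'(t),n(\alpha(t)))$ is a positive basis, so ``$\alpha$ and $\beta$ have matching orientation at time $t$'' means exactly that $(\beta'(t),n(\beta(t)))$ is a positive basis.

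The heart of the argument is two observations from convexity. First, since $\beta$ has image in $C$, the velocity $\beta'(t)$ is tangent to $C$ at $\beta(t)$, hence $\beta'(t)\perp n(\beta(t))$; so (when $\beta'(t)\neq 0$) the unit vector $n(\beta(t))$ is $\pm$ the counterclockwise quarter-turn $J$ of $\beta'(t)/\|\beta'(t)\|$, and $(\beta'(t),n(\beta(t)))$ is a positive basis exactly when the sign is $+$. Second, since $C$ is strictly convex and $\alpha(t)$ lies in $\conv(C)$, hence on the inner-normal side of the supporting tangent line to $C$ at $\beta(t)$, we get $n(\beta(t))\cdot(\alpha(t)-\beta(t))>0$; as $\alpha(t)-\beta(t)=-s(t)n(\alpha(t))$ with $s(t)>0$, this yields the key sign
\[ n(\alpha(t))\cdot n(\beta(t))<0 . \]
Using the identity $n(\alpha(t))\cdot Jv=\det\!\big(v\mid n(\alpha(t))\big)$, these two facts combine to show that $(\beta'(t),n(\beta(t)))$ is a positive basis if and only if $\det\!\big(\beta'(t)\mid n(\alpha(t))\big)<0$, i.e.\ if and only if $(\beta'(t),n(\alpha(t)))$ is a \emph{negative} basis. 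Finally I would apply Lemma~\ref{lem:2}: $(\beta'(t),n(\alpha(t)))$ is a positive basis iff $\alpha(t)$ and $\beta(t)$ lie in the same component of $L(t)\setminus\{x_\alpha(t)\}$, i.e.\ iff $x_\alpha(t)$ is not between $\alpha(t)$ and $\beta(t)$ on $L(t)$; and since strict convexity forces $L(t)\cap\conv(C)$ to be exactly the chord from $\alpha(t)$ to $\beta(t)$ while $x_\alpha(t)=\alpha(t)+\tfrac1{\kappa(t)}n(\alpha(t))\neq\alpha(t)$, ``not between'' is precisely $x_\alpha(t)\notin\conv(C)$. Chaining the equivalences gives: matching orientation $\iff$ $(\beta'(t),n(\beta(t)))$ positive $\iff$ $(\beta'(t),n(\alpha(t)))$ negative $\iff$ $x_\alpha(t)\in\conv(C)$, as claimed.

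The step I expect to be the main obstacle is establishing the sign $n(\alpha(t))\cdot n(\beta(t))<0$ cleanly — the intuitive statement that the inner normals at the two ends of a chord of a strictly convex curve make an obtuse angle — together with the borderline case $x_\alpha(t)=\beta(t)$ (center of curvature on $C$), where $\beta'(t)=0$ and the ``basis'' degenerates; this does not affect the use of the lemma in Theorem~\ref{thm:R2}, since there only the ``for all $t$'' statement is needed and isolated zeros of $\beta'$ do not obstruct injectivity of $h$. A more computational route that sidesteps Lemma~\ref{lem:2} would be to differentiate $\beta(t)=\alpha(t)+s(t)n(\alpha(t))$ using the planar Frenet identity $n'(\alpha(t))=-\kappa(t)\alpha'(t)$ from~\eqref{eq:n'}, obtaining $\beta'(t)=(1-s(t)\kappa(t))\alpha'(t)+s'(t)n(\alpha(t))$, and then to combine $\beta'(t)\perp n(\beta(t))$ with the sign above to get
\[ \det\!\big(\beta'(t)\mid n(\beta(t))\big)=\frac{\big(n(\alpha(t))\cdot n(\beta(t))\big)\,\|\beta'(t)\|^{2}}{\big(1-s(t)\kappa(t)\big)\,\|\alpha'(t)\|}, \]
whose sign is $+$ exactly when $s(t)\kappa(t)>1$, i.e.\ exactly when $x_\alpha(t)$ lies in the interior of the chord $\alpha(t)\beta(t)\subseteq\conv(C)$; this matches the conclusion of the first route and again isolates the only delicate point as the case $s(t)\kappa(t)=1$.
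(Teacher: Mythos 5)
Your proposal is correct and follows essentially the same route as the paper's proof: both reduce matching orientation to the sign of the basis $(\beta'(t),n(\alpha(t)))$ via the observation that $n(\alpha(t))$ and $n(\beta(t))$ point to opposite sides of the tangent line at $\beta(t)$ (your inequality $n(\alpha(t))\cdot n(\beta(t))<0$ is exactly the paper's ``$n(\alpha(t))$ points outward at $\beta(t)$''), and then invoke Lemma~\ref{lem:2} together with the fact that $x_\alpha(t)\in\conv(C)$ iff it lies on the chord between $\alpha(t)$ and $\beta(t)$. The only differences are presentational (a chain of equivalences rather than two cases), plus your flagging of the degenerate point $s(t)\kappa(t)=1$, which the paper does not address but which, as you note, does not affect the application in Theorem~\ref{thm:R2}.
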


\begin{proof}
Note that, by definition of convexity, $C$ lies completely on one side of its tangent lines.
A vector $v\in\R^2$ with its tail placed at $p\in C$ points toward the interior of $C$ if it is completely contained on the same side of the tangent line to $C$ at $p$ as $C$.
For such a vector, there exists some constant $c>0$ such that $c v$ intersects $C\setminus\{p\}$ at a unique point $\tp$.
If we instead place the tail of $v$ at $\tp$, then $v$ is on the side of the tangent line to $C$ at $\tp$ that does not contain $C$.
Thus, $v$ points toward the exterior of $C$ at $\tp$.
By definition, the unit normal vector to $C$ at each $p\in C$ points to the interior of $C$.
This means that $n(\alpha(t))$ points to the interior of $C$ when its tail is placed at $\alpha(t)$, and to the exterior of $C$ when its tail is placed at $\beta(t)$.
In summary, when their tails are placed at $\beta(t)$, both of the vectors $-n(\alpha(t))$ and $n(\beta(t))$ point toward the interior of $C$.

Define $s\colon I\to \R$ to satisfy $\beta(t)=\alpha(t)+s(t)n(\alpha(t))$; note that $s$ is differentiable by Lemma~\ref{lem:s-diff}.
Suppose $x_\alpha(t)\in\conv(C)$.
So $0<\frac{1}{\kappa(t)}<s(t)$.
From Lemma~\ref{lem:2}, $(\beta'(t),n(\alpha(t)))$ is a negative basis for $\R^2$, and hence $(\beta'(t),-n(\alpha(t)))$ is a positive basis for $\R^2$.
So it must be the case that $(\beta'(t),-n(\alpha(t)))$ and $(\beta'(t),n(\beta(t)))$ are bases of the same sign.
Thus, $\alpha$ and $\beta$ have matching orientations.

Next, suppose that $x_\alpha\notin \conv(C)$.
So $s(t)<\frac{1}{\kappa(t)}$.
From Lemma~\ref{lem:2}, $(\beta'(t),n(\alpha(t)))$ is a positive basis.
However, at $\beta(t)$, the vector $n(\alpha(t))$ points outward while $n(\beta(t))$ must point inward.
Thus, $(\beta'(t),n(\beta(t)))$ is a negative basis, and so $\alpha$ and $\beta$ do not have matching orientations.
\end{proof}

\section{Theorem~\ref{thm:R2}: (iv) $\Rightarrow$ (iii)}\label{app:iv-implies-iii}

In this appendix we show that in the proof of Theorem~\ref{thm:R2}, (iv) $\Rightarrow$ (iii) is also true.
The argument is subtle, but we proceed regardless.

Suppose for a contradiction that $d_p$ has more than two critical points for some $p\in C$ contradicting (iii); our task is to find some point $\tp\in C$ that does not satisfy property (iv).
We may therefore assume that $d_p\colon C\to \R$ has a unique global maximum, call it $p^+$ (for otherwise we are done).
The assumption on $p$ means that there exists some critical point $q$ of $d_p$ that is neither $p$ nor $p^+$.
Since $q$ is a critical point we can find a curve $\alpha\colon (-\delta,\delta)\to C$ with $\alpha(0)=q$ and $\frac{d}{dt}d_p(\alpha(t))=0$ for $t=0$.

\begin{figure}[htb]
\centering
\includegraphics[width=0.4\textwidth]{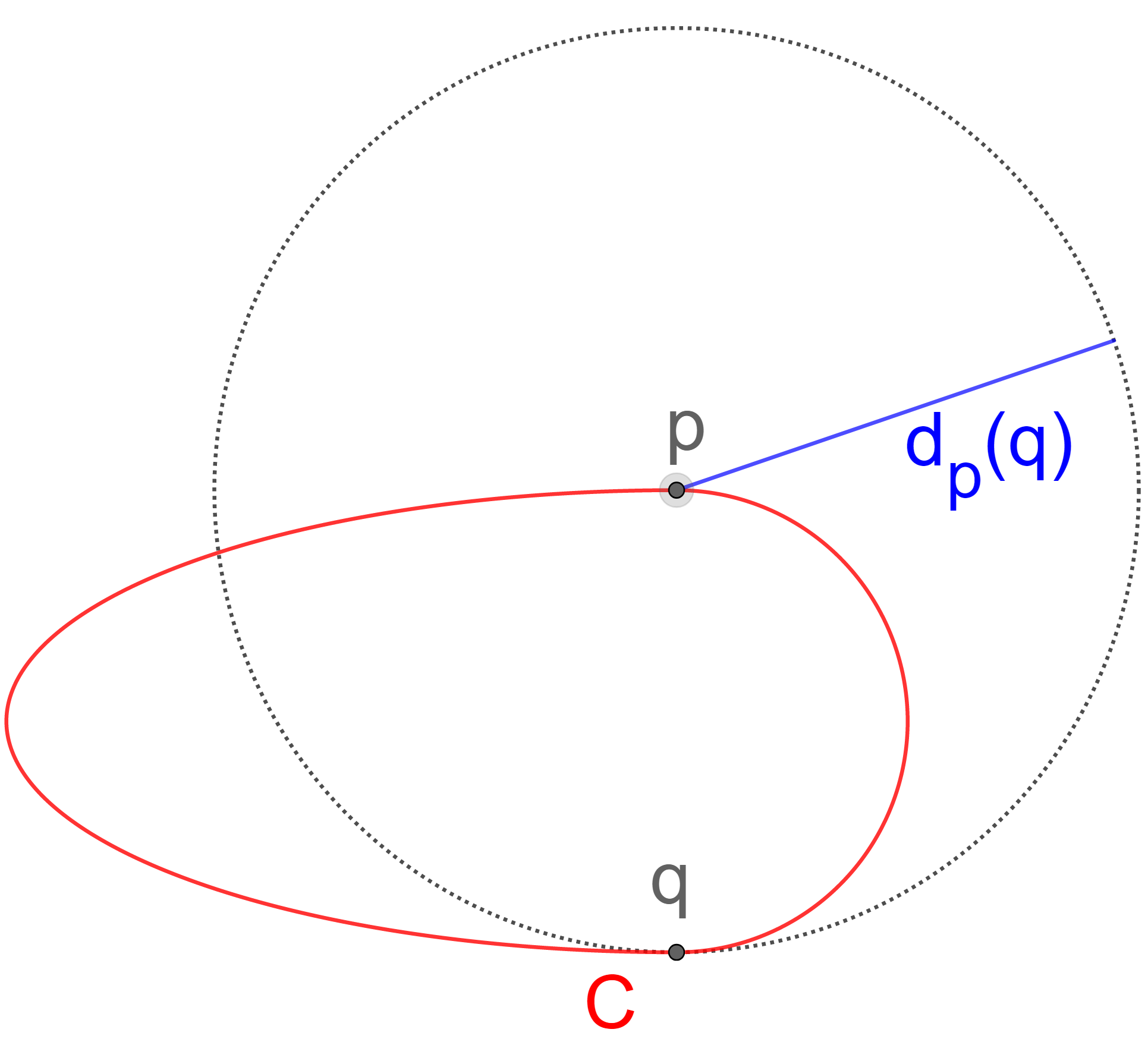}
\caption{For a \emph{single} point $p$, it may be that $d_p$ has more critical points than extrema.
Indeed, see the half-circle-half-ellipse example for $C$ above, with the specific point $p$ as pictured.
We have three critical points and only two extrema of $d_p$ (note that $q$ is a critical point of $d_p$ that is not an extremum).
Nevertheless, (iv) $\Rightarrow$ (iii) still holds since for \emph{other} points $\tp\in C$, we have more than two extrema of $d_{\tp}$.}
\label{fig:circle-ellipse}
\end{figure}

We claim that there is a point $\tp\in C$ arbitrarily close to $p$ such that 
\begin{enumerate}
\item[(a)] $\frac{d}{dt}d_{\tp}(\alpha(t))<0$ for $t=0$, and
\item[(b)] a global maximum of $d_{\tp}$ is arbitrarily close to $p^+$.
\end{enumerate}
We first show (a).
By \eqref{eq:distance-derivative}, the sign of this derivative depends only on whether the angle between $\alpha'(0)$ and $\tp-q$ is larger or smaller than $\frac{\pi}{2}$ radians.
Since the angle between $\alpha'(0)$ and $p-q$ is exactly equal to $\frac{\pi}{2}$ radians, there is some satisfactory point $\tp$ arbitrarily close to $p$.
We next show that we can additionally satisfy (b).
Since $d_p$ is a continuous function on a compact domain with a unique global maximum $p^+$, given any $\delta>0$, there exists some $\epsilon>0$ such that every point $x\in C$ with $d_p(x)\ge d_p(p^+)-\epsilon$ satisfies $\|x-p^+\|<\delta$.
Now, choose $\tp$ sufficiently close to $p$ so that the sup norm of the difference between the functions $d_p$ and $d_{\tp}$ is less than $\frac{\epsilon}{2}$.
It then follows that any global maximum of $d_{\tp}$ is within $\delta$ of $p^+$.

Note that $\tp$ is the global minimum of $d_{\tp}$, and that as we wrap around $C$ in one direction towards a global maximum of $d_{\tp}$, we have found a point $\alpha(0)$ such that $\frac{d}{dt}d_{\tp}(\alpha(t))<0$ for $t=0$.
It follows that $d_{\tp}$ does not satisfy the monotonicity property in (iv), and hence we have completed the proof of (iv) $\Rightarrow$ (iii).

\end{document}